\documentclass[11pt]{amsart}
\usepackage{epsfig}
\usepackage{amsbsy,latexsym}
\usepackage{amsmath}
\usepackage{amssymb, mathrsfs}
\usepackage[mathscr]{eucal}
\usepackage{hyperref}
\usepackage{amsfonts}
\usepackage{amsthm}
\usepackage{amsmath}
\usepackage{amsfonts}
\usepackage{latexsym}
\usepackage{amssymb}
\usepackage{amscd}
\usepackage[latin1]{inputenc}
\usepackage{verbatim}

\usepackage[english]{babel}

 %
%
\newtheorem{definition}{Definition}[section]
\newtheorem{theorem}[definition]{Theorem}
\newtheorem{lemma}[definition]{Lemma}
\newtheorem{corollary}[definition]{Corollary}
\newtheorem{proposition}[definition]{Proposition}
\theoremstyle{definition}
\newtheorem{remark}[definition]{Remark}

\newtheorem{example}[definition]{Example}




\newcommand{\C}{\mathbb{C}}














\newcommand\tr{ \operatorname{Tr} }
\newcommand\spn{ \operatorname{span} }



\def\ket#1{| #1 \rangle}
\def\bra#1{\langle #1 |}
\def\kb#1#2{|#1\rangle\!\langle #2 |}



\title[Quantum Complementarity and Operator Structures]{Quantum Complementarity and Operator Structures}

\begin{document}

\author[D.~W. Kribs, J.~Levick, M.~I.~Nelson, R. Pereira, M.~Rahaman]{David~W.~Kribs$^{1,2}$, Jeremy Levick$^{2,3}$, Mike~I.~Nelson$^{1}$, Rajesh Pereira$^{1}$, Mizanur Rahaman$^{4}$}

\address{$^1$Department of Mathematics \& Statistics, University of Guelph, Guelph, ON, Canada N1G 2W1}
\address{$^2$Institute for Quantum Computing, University of Waterloo, Waterloo, ON, Canada N2L 3G1}
\address{$^2$Department of Mathematics \& Applied Mathematics, University of Cape Town, Cape Town 7700, South Africa}
\address{$^4$Department of Pure Mathematics, University of Waterloo, Waterloo, ON, Canada N2L 3G1}

\subjclass[2010]{47A20, 47L90, 81P45, 81P94, 94A40}

\keywords{completely positive map, quantum channel, complementary maps, operator algebra, operator system, private quantum code, quantum error correcting code, private algebra, correctable algebra, multiplicative domain.}

\maketitle


\begin{abstract}
We establish operator structure identities for quantum channels and their error-correcting and private codes, emphasizing the complementarity relationship between the two perspectives. Relevant structures include correctable and private operator algebras, and operator spaces such as multiplicative domains and nullspaces of quantum channels and their complementary maps. For the case of privatizing to quantum states, we also derive dimension inequalities on the associated operator algebras that further quantify the trade-off between correction and privacy. 
\end{abstract}

\section{Introduction}

The complementary relationship between quantum error correction and quantum privacy is well established. Most relevant to the present work, a quantum code is correctable for a quantum channel if and only if it is private for the channel's complementary map \cite{kks}. This linkage of two fundamental topics in quantum information has more recently \cite{cklt} been extended to the complementarity of appropriate notions of correctable operator algebras \cite{bkk1,bkk2} and private subsystems and algebras \cite{ambainis,boykin,bartlett2,bartlett1,church,jochym1,jochym,lkp,ljklp,cklt}, and to a setting that embraces descriptions of hybrid classical and quantum information \cite{kup,devetak2005capacity,hsieh2010entanglement,yard2005simultaneous,kremsky2008classical,braunstein2011zero,grassl2017codes,klappen2018}.
Quantum error correction as a subject is considerably more developed than the theory of private quantum codes and algebras, with origins going back over two decades to the beginnings of modern quantum information science~\cite{shor1995pw,steane1996error,gottesman1996d,bennett1996ch,knill-lafl,knill2000theory}. The complementarity relationship suggests that developments in one field could at the least influence progress in the other. Of particular interest here, we note how completely positive map multiplicative domain structures  and techniques \cite{choi1} have been used to describe traditional quantum error correcting (subspace and subsystem) codes in terms of operator structures associated with quantum channels \cite{c-j-k,johnston,miza}.

Our goals with this paper are thus threefold. We first extend the multiplicative domain description of a quantum channel's error correcting codes to the setting of (finite-dimensional) correctable algebras. We next identify appropriate operator structures, determined by certain operator null spaces, that describe a channel's private codes and algebras and we show explicitly how they are related to the corresponding multiplicative domains, giving an operator structure depiction of quantum complementarity. Finally, we push our analysis further in the distinguished special case of algebras privatized to quantum states and derive a number of dimension inequalities relating correctable and private pairs of algebras, further quantifying the trade-off between correction and privacy in that case. 

This paper is organized as follows. The next section contains requisite background material on complementary quantum channels, correctable algebras, and private algebras, and includes a new simple proof of complementarity in the ideal case. In Section~3, we extend the multiplicative domain descriptions of quantum error correction to the case of general (finite-dimensional) algebras, identify appropriate null space structures that characterize private algebras, and explicitly relate the complementary structures to each other. We also consider the case of unital channels and uncover extra features for that subclass. In Section~4, we extend the analysis of these identities and structures to derive the aforementioned dimension inequalities that relate the complementary sizes of correctable and private algebra pairs for a given channel. We conclude with a brief discussion of connections and potential future directions.

\section{Complementary Channels and Correctable vs Private Algebras}

We use standard quantum information notation throughout the paper as in~\cite{nc00}. In this section we introduce our requisite preliminary notions: complementary quantum channels, and then correctable and private operator algebras based on the formulation from \cite{cklt}. We shall work with finite-dimensional Hilbert spaces $\mathcal H$, where the sets of linear, trace class, and bounded operators coincide: $\mathcal L(\mathcal H) = \mathcal T(\mathcal H) = \mathcal B(\mathcal H)$, and so for ease of presentation we use $\mathcal L(\mathcal H)$ to denote these sets.

\subsection{Complementary Quantum Channels} By a {\it channel}, we mean a completely positive and trace preserving map $\Phi : \mathcal L(\mathcal H_A) \rightarrow \mathcal L(\mathcal H_A)$ on a Hilbert space $\mathcal H_A$. The Stinespring dilation theorem \cite{stinespring} gives a Hilbert space $\mathcal H_C$ (with $\dim \mathcal H_C \leq (\dim\mathcal H_A)^2$), a state $\ket{\psi_C}\in\mathcal H_C$ and a unitary $U$ on $\mathcal H_A \otimes \mathcal H_C$ such that for all $\rho\in\mathcal L(\mathcal H_A)$,
\begin{equation}\label{stinespring}
\Phi(\rho) = \tr_C \circ \, \mathcal U \, (\rho \otimes \kb{\psi_C}{\psi_C}) = \tr_C \circ \, \mathcal V \, (\rho),
\end{equation}
where here $\tr_C$ denotes the partial trace map from $\mathcal L (\mathcal H_A \otimes \mathcal H_C)$ to $\mathcal L(\mathcal H_A)$, the map $\mathcal U(\cdot) = U(\cdot) U^*$, and $\mathcal V(\cdot) = V(\cdot) V^*$ is the map implemented by the isometry $V: \mathcal H_A \rightarrow \mathcal H_A \otimes \mathcal H_C$ defined by $V\ket{\psi} = U(\ket{\psi}\otimes \ket{\psi_C})$.

Given such a channel $\Phi$, we shall work with the {\it complementary map} $\Phi^C$ defined from $\mathcal L(\mathcal H_A)$ to $\mathcal L(\mathcal H_C)$ as follows:
\begin{equation}\label{compchannel}
\Phi^C(\rho) = \tr_A \circ \, \mathcal V \, (\rho).
\end{equation}
The channel $\Phi^C$ is unique in the following sense: Given any channel $\Phi' : \mathcal L(\mathcal H_A)\rightarrow \mathcal L(\mathcal H_{C'})$ such that $\Phi(\rho) = \tr_{C'} V_1 \rho V_1^*$ and $\Phi'(\rho) = \tr_A V_1 \rho V_1^*$ with $V_1 : \mathcal H_A \rightarrow \mathcal H_A \otimes \mathcal H_{C'}$, there is a partial isometry $W: \mathcal H_C \rightarrow \mathcal H_{C'}$ such that $\Phi'(\cdot) = W \Phi^C (\cdot) W^*$. When $\mathcal H_C$ and $\mathcal H_{C'}$ are minimal (which occurs when both have dimension equal to the so-called Choi rank of $\Phi$), $W$ is a unitary operator. For brevity, unless otherwise noted we shall assume $\Phi^C$ is the (unique) minimal complementary map in this sense.


See \cite{holevo2,holevo,entng-brkng} for further details on complementary channels. Here we only note additionally how the Kraus operators for the two maps $\Phi$, $\Phi^C$ are related: If $\Phi$ has operator-sum representation $\Phi(\rho) = \sum_i V_i \rho V_i^*$ with Kraus operators $V_i\in\mathcal L(\mathcal H_A)$ (which are guaranteed to exist by the Stinespring theorem), then the complementary map has representation,
\[
\Phi^C(\rho) = \sum_{i,j} \tr(\rho V_j^* V_i) \, \kb{i}{j},
\]
where $\{ \ket{i} \}$ is a canonical basis for $\mathbb C^d$ identified with $\mathcal H_C$, and $d = \dim \mathcal H_C$.

\subsection{Correctable Algebras} The general framework for error correction, called ``operator algebra quantum error correction'' (OAQEC) \cite{bkk1,bkk2,bkk3}, when applied to the finite-dimensional case makes use of the structure theory for finite-dimensional von Neumann algebras (or equivalently, C$^*$-algebras). Specifically, codes are identified with algebras that up to unitarily equivalence can be decomposed as $\mathcal A = \oplus_k (I_{m_k}\otimes M_{n_k})$, where $M_n$ is the set of $n \times n$ complex matrices. An OAQEC code is described as follows in each of the Schr\"{o}dinger and Heisenberg pictures for  quantum dynamics. We shall use the notation $\Phi^\dagger$ for the dual map of $\Phi$ defined via the trace inner product: $\tr (\Phi(\rho) X) = \tr (\rho \, \Phi^\dagger(X))$.

\begin{definition}
Let $\mathcal H$ be a (finite-dimensional) Hilbert space and let $Q$ be a projection on $\mathcal H$. Given a channel $\Phi : \mathcal L(\mathcal H) \rightarrow \mathcal L(\mathcal H)$, a von Neumann subalgebra $\mathcal A \subseteq \mathcal L(Q \mathcal H)$ is {\bf  correctable for $\Phi$ with respect to $Q$} if there exists a channel $\mathcal R : \mathcal L (\mathcal H_C) \rightarrow \mathcal A$ such that
\begin{equation}\label{correctableHeisen}
\mathcal P_Q \circ \Phi^\dagger \circ \mathcal R^\dagger = \mbox{id}_{\mathcal A},
\end{equation}
where $\mathcal P_{Q}$ is the compression map $\mathcal P_Q(\cdot) = Q (\cdot ) Q$. When $Q = I$ we simply say $\mathcal A$ is {\bf correctable for $\Phi$}.
\end{definition}

The case of standard (Knill-Laflamme) error correction is captured with algebras $\mathcal A = P_{\mathcal C} \mathcal L(\mathcal H) P_{\mathcal C}$, where $\mathcal C$ is a subspace of $\mathcal H$ and $Q = P_{\mathcal C}$. When $\mathcal C = \mathcal H_A \otimes \mathcal H_B$ has some tensor decomposition, correctable algebras $\mathcal A = P_{\mathcal C} (I_A \otimes \mathcal L(\mathcal H_B)) P_{\mathcal C}$ are ``operator subsystem codes'' \cite{klpo,klp} when $\dim \mathcal H_B > 1$ and classical codes when $\dim \mathcal H_B =1$.  Algebras $\mathcal A$ comprised of direct sums give mixtures of these various possibilities and allow for hybrid classical and quantum information encodings \cite{bkk1,bkk2,kup}. Such an algebra, with direct sum decomposition as above, is correctable for $\Phi$ with respect to its unit projection if and only if for all density operators $\sigma_k^{(i)}$ and probability distributions $p_k$,  there is a channel $\mathcal R$ on $\mathcal H$ and density operators $\sigma_k^{(1)'}$ such that
\begin{equation}\label{correctableSchro}
(\mathcal R \circ \Phi)\, \left( \sum_k p_k  (\sigma_k^{(1)}\otimes \sigma_k^{(2)})    \right)  =    \sum_k p_k  (\sigma_k^{(1)'}\otimes \sigma_k^{(2)}) .
\end{equation}

\subsection{Private Algebras} Private quantum channels were initially introduced as the quantum analogue of the classical one-time pad \cite{ambainis,boykin}. Over subsequent years the idea has been distilled and extended, culminating in the following general notion of what are called ``private algebras'' (see \cite{cklt} and references therein), a notion most cleanly presented in the Heisenberg picture.

\begin{definition}
Let $\mathcal H$ be a (finite-dimensional) Hilbert space and let $Q$ be a projection on $\mathcal H$. Given a channel $\Phi : \mathcal L(\mathcal H) \rightarrow \mathcal L(\mathcal H)$, a von Neumann subalgebra $\mathcal A \subseteq \mathcal L(Q \mathcal H)$ is {\bf private for $\Phi$ with respect to $Q$} if
\begin{equation}\label{privateHeisen}
\mathcal P_Q \circ \Phi^\dagger (\mathcal L(\mathcal H)) \subseteq \mathcal A^\prime = \{ X\in \mathcal L(Q \mathcal H) \,\, | \, \, [X,A]=0 \,\, \forall A \in \mathcal A \}.
\end{equation}
When $Q = I$ we simply say $\mathcal A$ is {\bf private for $\Phi$}.
\end{definition}

This definition is motivated by the notion of an ``operator private subsystem''  \cite{brs,ljklp}: Suppose we have $ H = (\mathcal H_A \otimes \mathcal H_B) \oplus (\mathcal H_A \otimes \mathcal H_B)^\perp$ and a channel $\Phi$ on $\mathcal H$. Then the subsystem $B$ is called an operator private subsystem for $\Phi$ if $\Phi \circ \mathcal P_{\mathcal C} = (\Psi \otimes \tr)\circ \mathcal P_{\mathcal C}$ for some channel $\Psi : \mathcal L(\mathcal H_A)\rightarrow \mathcal L(\mathcal H)$, where $\mathcal P_{\mathcal C}(\cdot) = P_{\mathcal C} (\cdot) P_{\mathcal C}$ with $P_{\mathcal C}$ the projection of $\mathcal H$ onto $C = \mathcal H_A \otimes \mathcal H_B$. One can check through direct calculation and application of the dual map relation that this is equivalent to: $\mathcal P_C \circ \Phi^\dagger  (\mathcal L(\mathcal H)) \subseteq \mathcal L(\mathcal H_A)\otimes I_B = (I_A \otimes \mathcal L(\mathcal H_B))^\prime$; in other words, that the algebra $\mathcal A = I_A \otimes \mathcal L(\mathcal H_B)$ is private for $\Phi$ with respect to $P_{\mathcal C}$.

As articulated in \cite{cklt}, use of the ``private'' terminology is motivated by the fact that any information stored in the operator private subsystem $B$ completely decoheres under the action of $\Phi$. From the Heisenberg perspective, observables on the output system evolve under $\Phi$ to observables having the same measurement statistics with respect to the subsystem $B$. 
For more general private subalgebras though, not all information about observables in the algebra $\mathcal A$ is lost under the action of $\Phi$, just the quantum information: more precisely, the only obtainable information about $\mathcal A$ after an application of the channel is the classical information contained in its centre $\mathcal Z(\mathcal A)=\mathcal A \cap \mathcal A^\prime$. We recover the original notion of privacy when $\mathcal A$ is a von Neumann algebra factor ($\mathcal Z(\mathcal A)=\mathbb{C}I$), and factors of type~I specifically correspond to operator private subsystems. The above definition allows for more general private scenarios as depicted by more general algebras.

\subsection{Complementarity for Perfect Correction and Privacy} We conclude this section by presenting a simple new proof of  complementarity between quantum error correction and privacy in the ideal ($\varepsilon = 0$) case of perfect correction and privacy. We first recall the testable conditions for correctable algebras derived in \cite{bkk1,bkk2}, which in turn built upon the central Knill-Laflamme conditions for standard \cite{knill-lafl} and operator \cite{klp,klpo} quantum error correction.

\begin{theorem}\label{testable}
Let $\mathcal H$ be a Hilbert space and let $Q$ be a projection on $\mathcal H$. Given a channel $\Phi : \mathcal L(\mathcal H) \rightarrow \mathcal L(\mathcal H)$, an algebra $\mathcal A \subseteq \mathcal L(Q \mathcal H)$ is correctable for $\Phi$ with respect to $Q$ if and only if
\begin{equation}\label{testablecond}
[Q V_j^* V_i Q, X]=0  \quad \forall X\in \mathcal A, \, \, \forall i,j . 
\end{equation}
\end{theorem}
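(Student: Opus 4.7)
The plan is to prove the two directions separately, using Choi's multiplicative-domain theorem for unital completely positive maps together with the structure theorem for finite-dimensional von Neumann algebras.

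For the necessity direction, suppose there is a recovery channel $\mathcal R$ with $\mathcal P_Q \circ \Phi^\dagger \circ \mathcal R^\dagger = \mathrm{id}_{\mathcal A}$, and let $R_k$ denote Kraus operators for $\mathcal R^\dagger$. The composite $\Lambda := \mathcal P_Q \circ \Phi^\dagger \circ \mathcal R^\dagger$ is unital and completely positive on $\mathcal L(Q\mathcal H)$ and acts as the identity on the $*$-subalgebra $\mathcal A$. Since $\mathcal A$ is closed under products and involutions, pointwise fixing gives $\Lambda(X^*X) = X^*X = \Lambda(X)^*\Lambda(X)$ for every $X \in \mathcal A$, so $\mathcal A$ lies in the multiplicative domain of $\Lambda$ by Choi's theorem. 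Expressing $\Lambda$ in Kraus form as $\Lambda(Y) = \sum_{i,k}(R_k V_i Q)^* Y (R_k V_i Q)$ and feeding products $X_1 X_2 \in \mathcal A$ through the multiplicativity relation $\Lambda(X_1 X_2) = \Lambda(X_1)\Lambda(X_2)$, one derives Schwarz-type intertwining relations between $X \in \mathcal A$ and the effective Kraus operators $R_k V_i Q$. Summing over the recovery index $k$ using $\sum_k R_k R_k^* = I$ (unitality of $\mathcal R^\dagger$, equivalently trace preservation of $\mathcal R$) collapses the recovery contribution and yields precisely $[Q V_j^* V_i Q, X] = 0$ for all $X \in \mathcal A$ and all $i, j$.

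For the sufficiency direction, assume the commutation $[Q V_j^* V_i Q, X] = 0$ holds; equivalently, each $Q V_j^* V_i Q$ lies in the commutant $\mathcal A' \subseteq \mathcal L(Q\mathcal H)$. Use the structure theorem to write $\mathcal A \cong \bigoplus_k (I_{m_k} \otimes M_{n_k})$ with a compatible decomposition $Q\mathcal H \cong \bigoplus_k (\mathbb C^{m_k} \otimes \mathbb C^{n_k})$. The commutation forces each $Q V_j^* V_i Q$ to be block diagonal with blocks of the form $A_{ij}^{(k)} \otimes I_{n_k}$, which is precisely the operator (subsystem) Knill--Laflamme condition on each factor. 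A standard polar-decomposition argument applied to the restricted error operators on each block then produces isometric partial recoveries that assemble, via a syndrome measurement in the multiplicity spaces, into a single direct-sum channel $\mathcal R$ satisfying the required Heisenberg identity.

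The main obstacle is the sufficiency direction: ensuring that the block-wise isometric recoveries combine into a single globally trace-preserving completely positive map, and that the gauge (multiplicity) factors $I_{m_k}$ --- which encode the "noisy" degrees of freedom that $\Phi$ may alter freely --- are handled consistently across the direct sum. Verifying that the assembled $\mathcal R$ inverts $\Phi$ coherently on all of $\mathcal A$, rather than only on each individual summand, is the technical crux of the argument.
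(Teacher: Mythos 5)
You should first note that the paper does not actually prove Theorem~\ref{testable}: it recalls it from \cite{bkk1,bkk2}. The closest the paper comes is the proof of Theorem~\ref{mdpicorrectable}, which contains the intended argument for the $Q=I$ case, so I compare against that. Your necessity direction is essentially the paper's: the composite $\mathcal P_Q\circ\Phi^\dagger\circ\mathcal R^\dagger$ is unital, completely positive, and fixes the $*$-algebra $\mathcal A$, so equality in the Schwarz inequality places $\mathcal A$ in its multiplicative domain; the sum-of-squares identity forces each effective Kraus operator to intertwine with $X\in\mathcal A$, and summing over the recovery index using trace preservation of $\mathcal R$ collapses to $[QV_j^*V_iQ,X]=0$. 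This is correct (only watch the adjoint placement: with $\mathcal R(\rho)=\sum_k R_k\rho R_k^*$ the identity that collapses the sum is $\sum_k R_k^*R_k=I$, not $\sum_k R_kR_k^*=I$). Where you genuinely diverge is the sufficiency direction. You pass to the Wedderburn decomposition, read off the subsystem Knill--Laflamme condition on each diagonal block, and assemble blockwise polar-decomposition recoveries via a syndrome measurement --- and you correctly identify the assembly into a single trace-preserving map as the technical crux. The route taken by the paper's own machinery avoids that crux entirely: from $\mathcal A\subseteq\{QV_j^*V_iQ\}'$ one derives $\Phi(A)\Phi(QXQ)=\Phi(Q)\Phi(AXQ)$ and $[\Phi(A),\Phi(Q)]=0$, sets $\pi(A)=\Phi(Q)^{+1/2}\Phi(A)\Phi(Q)^{+1/2}$ with the pseudo-inverse, and verifies directly that the single map $\mathcal R(X)=Q\,\Phi^\dagger\big(\Phi(Q)^{+1/2}X\Phi(Q)^{+1/2}\big)Q$ recovers all of $\mathcal A$, with no case analysis over summands and no separate argument that the sectors remain distinguishable after the noise. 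Your approach is completable --- the inter-sector blocks $P_kV_j^*V_iP_{k'}$ with $k\neq k'$ vanish because the minimal central projections $P_k$ lie in $\mathcal A$, which is exactly what makes your syndrome measurement well defined --- but the explicit Petz-type recovery buys the entire converse in a few lines and is the construction you should adopt if you want the proof self-contained.
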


The approximate version of the following theorem was established via dilation theory techniques separately in the finite (\cite{kks}) and infinite (\cite{cklt}) dimensional cases. Our proof below for the ideal case is different in that it makes use of Kraus operator representations and the relevant operator structures.

\begin{proposition}\label{perfectcase}
Let $\mathcal A$ be a subalgebra of $\mathcal L(Q \mathcal H)$, for some Hilbert space $\mathcal H$ and projection $Q$. Let $\Phi$ be a channel on $\mathcal H$ with complementary channel $\Phi^C$. Then $\mathcal{A}$ is correctable for $\Phi$ with respect to $Q$ if and only if $\mathcal A$ is private for $\Phi^C$ with respect to $Q$.
\end{proposition}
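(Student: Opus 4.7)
The plan is to reduce both conditions to the same operator identity by computing the dual of the complementary channel $\Phi^C$ in terms of the Kraus operators of $\Phi$, and then invoking Theorem~\ref{testable}.

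First I would fix an operator-sum representation $\Phi(\rho)=\sum_i V_i\rho V_i^*$ and use the explicit formula $\Phi^C(\rho)=\sum_{i,j}\tr(\rho V_j^*V_i)\,\kb{i}{j}$ from the excerpt. A direct computation using $\tr(\Phi^C(\rho)Y)=\tr(\rho\,(\Phi^C)^\dagger(Y))$ and expanding $Y=\sum_{i,j}\bra{i}Y\ket{j}\kb{i}{j}$ in the canonical basis of $\mathcal H_C$ yields
\[
(\Phi^C)^\dagger(Y)=\sum_{i,j}\bra{j}Y\ket{i}\,V_j^*V_i.
\]

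Next, I would apply the compression $\mathcal P_Q$ and observe that as $Y$ ranges over all of $\mathcal L(\mathcal H_C)$ the scalars $\bra{j}Y\ket{i}$ range independently over $\mathbb C$. Therefore
\[
\mathcal P_Q\circ(\Phi^C)^\dagger\bigl(\mathcal L(\mathcal H_C)\bigr)=\spn\{QV_j^*V_iQ\,:\,i,j\}.
\]
By definition, $\mathcal A$ is private for $\Phi^C$ with respect to $Q$ precisely when this span lies in $\mathcal A'$, which is equivalent to $[QV_j^*V_iQ,X]=0$ for every $X\in\mathcal A$ and all indices $i,j$.

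Finally, this last condition is exactly the Knill--Laflamme--type testable condition \eqref{testablecond} that, by Theorem~\ref{testable}, characterizes correctability of $\mathcal A$ for $\Phi$ with respect to $Q$. Chaining the equivalences closes both directions simultaneously. I do not expect any real obstacle here; the only subtle point to get right is the indexing in the dual formula and the observation that the coefficients $\bra{j}Y\ket{i}$ are genuinely free parameters, so that the image of $\mathcal P_Q\circ(\Phi^C)^\dagger$ really is the full span of the operators $QV_j^*V_iQ$ rather than a proper subspace.
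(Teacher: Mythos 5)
Your proposal is correct and follows essentially the same route as the paper: both reduce privacy of $\mathcal A$ for $\Phi^C$ with respect to $Q$ to the commutation conditions $[QV_j^*V_iQ,X]=0$ and then invoke Theorem~\ref{testable}. The only difference is organizational --- you compute $(\Phi^C)^\dagger(Y)=\sum_{i,j}\bra{j}Y\ket{i}V_j^*V_i$ in closed form and note that the coefficients are free parameters, which collapses both directions into one chain of equivalences, whereas the paper reaches the same identities through trace-duality computations and, for the converse, the specialization $Y=\kb{j_0}{i_0}$.
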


\begin{proof}
Suppose first that $\mathcal A$ is correctable for $\Phi$ with respect to $Q$. Then Eqs.~(\ref{testablecond}) hold. So we let $\rho\in\mathcal L(\mathcal H)$, $Y\in \mathcal L(\mathcal H_C)$, and $X\in \mathcal A$, and compute two identities as follows:
\begin{eqnarray*}
\tr(Q (\Phi^C)^\dagger(Y) Q X \rho) &=& \tr(Y \Phi^C(QX \rho Q)) \\
&=&  \tr \Big(Y \big( \sum_{i,j} \tr( QX\rho Q V_j^* V_i) \big) \kb{i}{j} \Big) \\
&=&  \sum_{i,j} \tr (\rho Q V_j^* V_i Q X)  \tr (Y\kb{i}{j}),
\end{eqnarray*}
and,
\begin{eqnarray*}
\tr( X Q (\Phi^C)^\dagger(Y) Q \rho) &=& \tr(X Q (\Phi^C)^\dagger (Y) Q \rho ) \\
&=& \tr((\Phi^C)^\dagger (Y) Q \rho X Q) \\
&=& \tr(Y \Phi^C ( Q \rho X Q) ) \\
&=&  \tr \Big(Y \big( \sum_{i,j} \tr( Q \rho X Q V_j^* V_i) \big) \kb{i}{j} \Big) \\
&=&  \sum_{i,j} \tr (\rho X Q V_j^* V_i Q)  \tr (Y\kb{i}{j}),
\end{eqnarray*}
from which we can conclude from Eqs.~(\ref{testablecond}) that these two quantities are equal. As $X, Y, \rho$ were arbitrary, it follows that $[\mathcal P_Q \circ (\Phi^C)^\dagger(Y),X]=0$ for all $X\in \mathcal A$ and hence $\mathcal A$ is private for $\Phi^C$ with respect to $Q$.

For the converse direction, suppose that $\mathcal A$ is private for $\Phi^C$ with respect to $Q$. Then $\mathcal P_Q \circ (\Phi^C)^\dagger (\mathcal L(\mathcal H_C)) \subseteq \mathcal A^\prime$, and so for all $X\in \mathcal A$, $\rho\in\mathcal L(\mathcal H)$, $Y\in \mathcal L(\mathcal H_C)$ we have
\[
\tr(Q (\Phi^C)^\dagger(Y) Q X \rho)  = \tr( X Q (\Phi^C)^\dagger(Y) Q \rho),
\]
and hence from the above calculations that
\[
\sum_{i,j} \tr (\rho Q V_j^* V_i Q X)  \bra{j} Y \ket{i} = \sum_{i,j} \tr (\rho X Q V_j^* V_i Q)  \bra{j} Y \ket{i}.
\]
To conclude the proof, we now fix a pair $i_0,j_0$ and apply this identity with $Y = \kb{j_0}{i_0}$ to obtain
\[
\tr (\rho Q V_{j_0}^* V_{i_0} Q X) = \tr (\rho X Q V_{j_0}^* V_{i_0} Q),
\]
which holds for all $\rho$ and $X$. Thus it follows that $[QV_{j_0}^* V_{i_0} Q, X]=0$ for all $X\in \mathcal A$, and hence by Lemma~\ref{testable} we have that $\mathcal A$ is correctable for $\Phi$ with respect to $Q$, and the result follows.
\end{proof}

\section{Complementary Operator Structures}

Operator structures have previously been identified that describe quantum error correction; for instance, multiplicative domains for channels and certain generalizations of them were shown to characterize operator and standard quantum error correction as part of the early expanded work on subsystem codes \cite{c-j-k,johnston}. Below we shall briefly review these structures and then extend the correspondence to OAQEC.

First though, we will identify operator structures that characterize private (subspaces, subsystems, and) algebras. We begin with a simple observation of
an elementary connection between the null space of a quantum channel and the sets of states that it privatizes. Let $\Phi$ be a channel from $M_n$ to $M_m$ and let $S$ be the null space of $\Phi$.  If $\rho_1$ and $\rho_2$ are $n\times n$ density matrices, then $\Phi(\rho_1)=\Phi(\rho_2)$ if and only if $\rho_1-\rho_2\in S$.
This observation suggests that nullspaces of channels can be used to describe privacy, and indeed this is the case.

\begin{lemma}\label{privlemma}
Let $\mathcal H$ be a Hilbert space and let $Q$ be a projection on $\mathcal H$. Given a channel $\Phi : \mathcal L(\mathcal H) \rightarrow \mathcal L(\mathcal H)$ with $\Phi(\rho) = \sum_i V_i \rho V_i^*$, the following commutants inside $\mathcal L(\mathcal H)$ coincide:
\begin{equation}\label{privlemmaeqn}
\{ Q V_j^* V_i Q \}^\prime_{i,j} =  \big( \big( \ker (\Phi^C \circ \mathcal P_Q) \big)^\perp  \big)^\prime ,
\end{equation}
where orthogonality is with respect to the trace inner product.
\end{lemma}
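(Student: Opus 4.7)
My plan is to unfold both sides to a common description in terms of linear spans and the trace inner product, at which point the identity becomes a general fact that commutants are determined by linear spans.

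First I would compute $\Phi^C \circ \mathcal P_Q$ explicitly using the formula for $\Phi^C$ given earlier in the paper. For any $\rho \in \mathcal L(\mathcal H)$,
\[
(\Phi^C \circ \mathcal P_Q)(\rho) = \sum_{i,j} \tr(Q\rho Q\, V_j^* V_i)\,\kb{i}{j} = \sum_{i,j} \tr\bigl(\rho\, Q V_j^* V_i Q\bigr)\,\kb{i}{j},
\]
using cyclicity of the trace. Since the matrix units $\{\kb{i}{j}\}$ are linearly independent, $(\Phi^C \circ \mathcal P_Q)(\rho) = 0$ if and only if $\tr(\rho\, QV_j^* V_i Q) = 0$ for all $i,j$. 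This is exactly the statement that $\rho$ is orthogonal (in the trace inner product) to every $QV_j^*V_iQ$. Hence
\[
\ker(\Phi^C \circ \mathcal P_Q) = \bigl(\spn\{QV_j^*V_iQ\}_{i,j}\bigr)^\perp.
\]

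Taking orthogonal complements of both sides (and using that orthogonal complements are involutive on subspaces of the finite-dimensional Hilbert space $\mathcal L(\mathcal H)$ with trace inner product), I obtain
\[
\bigl(\ker(\Phi^C \circ \mathcal P_Q)\bigr)^\perp = \spn\{QV_j^*V_iQ\}_{i,j}.
\]

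Finally, I would invoke the elementary fact that for any subset $S \subseteq \mathcal L(\mathcal H)$, the commutant $S'$ equals $(\spn S)'$: commuting with every element of $S$ is the same (by linearity of the bracket) as commuting with every linear combination. Applying this to $S = \{QV_j^*V_iQ\}_{i,j}$ yields
\[
\{QV_j^*V_iQ\}'_{i,j} = \bigl(\spn\{QV_j^*V_iQ\}_{i,j}\bigr)' = \bigl(\bigl(\ker(\Phi^C \circ \mathcal P_Q)\bigr)^\perp\bigr)',
\]
which is the claimed identity. There is no real obstacle here; the only point requiring a moment of care is the first step, where I use linear independence of the matrix units $\kb{i}{j}$ to pass from vanishing of $(\Phi^C \circ \mathcal P_Q)(\rho)$ to vanishing of each scalar $\tr(\rho\, QV_j^*V_iQ)$, after which everything is formal.
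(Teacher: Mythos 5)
Your proposal is correct and follows essentially the same route as the paper's proof: compute $(\Phi^C\circ\mathcal P_Q)(\rho)=\sum_{i,j}\tr(\rho\,QV_j^*V_iQ)\kb{i}{j}$, conclude that the kernel is the trace-orthogonal complement of $\spn\{QV_j^*V_iQ\}_{i,j}$, and pass to commutants. The paper leaves the final two steps (double orthocomplement and the fact that a set and its span have the same commutant) implicit, but the argument is identical in substance.
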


\begin{proof}
By direct calculation using the form of the complementary map, we have
\[
(\Phi^C \circ \mathcal P_Q)(X)  = \Phi^C(QXQ) = \sum_{i,j} \tr(X Q V_j^* V_i Q) \, \kb{i}{j}.
\]
Hence, $\ker (\Phi^C \circ \mathcal P_Q) = (\spn \{ Q V_j^* V_i Q \}_{i,j})^\perp$.
\end{proof}

\begin{theorem}\label{privkerthm}
Let $\mathcal A$ be a subalgebra of $\mathcal L(Q \mathcal H)$, for some Hilbert space $\mathcal H$ and projection $Q\in \mathcal L(\mathcal H)$. Let $\Phi$ be a channel on $\mathcal H$ with complementary channel $\Phi^C$. Then $\mathcal A$ is private for $\Phi$ with respect to $Q$ if and only if $\mathcal A$ is contained inside $( ( \ker (\Phi \circ \mathcal P_Q) )^\perp  )^\prime$.
\end{theorem}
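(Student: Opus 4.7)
The plan is to unpack both sides of the stated equivalence into a common statement about the range of the map $\mathcal P_Q \circ \Phi^\dagger$, using only two elementary ingredients: the Hilbert space identity $(\ker T)^\perp = \operatorname{Range}(T^\dagger)$ for the trace inner product, and the symmetry of commutation, $\mathcal A \subseteq S'$ if and only if $S \subseteq \mathcal A'$.

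First I would observe that the compression map $\mathcal P_Q$ is self-adjoint with respect to the trace inner product, since $\tr(Q X Q \cdot Y) = \tr(X \cdot Q Y Q)$, and therefore the adjoint of the composition $\Phi \circ \mathcal P_Q$ is $(\Phi \circ \mathcal P_Q)^\dagger = \mathcal P_Q \circ \Phi^\dagger$. Applied on the finite-dimensional Hilbert space $\mathcal L(\mathcal H)$ with trace inner product, the standard identity then gives
\[
\big(\ker(\Phi \circ \mathcal P_Q)\big)^\perp \;=\; \operatorname{Range}\big(\mathcal P_Q \circ \Phi^\dagger\big) \;=\; \mathcal P_Q \circ \Phi^\dagger (\mathcal L(\mathcal H)).
\]
Next I would simply compare the two statements. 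By the definition of privacy, $\mathcal A$ is private for $\Phi$ with respect to $Q$ exactly when $\mathcal P_Q \circ \Phi^\dagger(\mathcal L(\mathcal H)) \subseteq \mathcal A^\prime$. By the symmetry of commuting, this is equivalent to $\mathcal A \subseteq \big(\mathcal P_Q \circ \Phi^\dagger(\mathcal L(\mathcal H))\big)^\prime$, and substituting the identity from the previous step yields $\mathcal A \subseteq \big((\ker(\Phi \circ \mathcal P_Q))^\perp\big)^\prime$, as desired.

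Neither direction presents a genuine obstacle; the proof is really a pair of translations. The one point worth checking carefully is that the commutants are being taken inside the same algebra on both sides. Since $\ker \mathcal P_Q \subseteq \ker(\Phi \circ \mathcal P_Q)$, the orthogonal complement $(\ker(\Phi \circ \mathcal P_Q))^\perp$ sits inside $(\ker \mathcal P_Q)^\perp = \mathcal L(Q \mathcal H)$, which is the same ambient algebra in which $\mathcal A^\prime$ is defined in the privacy condition. Thus the commutants in the two formulations live in $\mathcal L(Q\mathcal H)$ consistently, and no ambiguity arises. In spirit this theorem is the exact operator-space dual of Lemma~\ref{privlemma}: that lemma describes the commutant side for the complementary channel, while this theorem describes the kernel side for the channel itself, both via the same bilinear pairing between range and kernel provided by the trace inner product.
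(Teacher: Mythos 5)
Your argument is correct, and it reaches the theorem by a genuinely different and more self-contained route than the paper does. The paper's proof is a chain of three earlier results: it uses Proposition~\ref{perfectcase} (together with the fact that $(\Phi^C)^C$ is isometrically equivalent to $\Phi$) to convert privacy for $\Phi$ into correctability for $\Phi^C$, then Theorem~\ref{testable} to convert that into the commutation conditions $[QW_j^*W_iQ,X]=0$ for the Kraus operators $W_i$ of $\Phi^C$, and finally Lemma~\ref{privlemma} (applied to $\Phi^C$) to identify $\{QW_j^*W_iQ\}'_{i,j}$ with $\bigl(\bigl(\ker(\Phi\circ\mathcal P_Q)\bigr)^\perp\bigr)'$. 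You instead never leave the channel $\Phi$ itself: the identity $(\Phi\circ\mathcal P_Q)^\dagger=\mathcal P_Q\circ\Phi^\dagger$ plus the finite-dimensional duality $(\ker T)^\perp=\operatorname{Range}(T^\dagger)$ shows that $\bigl(\ker(\Phi\circ\mathcal P_Q)\bigr)^\perp$ \emph{is} the range $\mathcal P_Q\circ\Phi^\dagger(\mathcal L(\mathcal H))$ appearing in the privacy definition, and the symmetry $S\subseteq\mathcal A'\Leftrightarrow\mathcal A\subseteq S'$ finishes the job; your remark that the range lands in $\mathcal L(Q\mathcal H)$ correctly disposes of the only ambient-algebra subtlety. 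What your approach buys is economy and generality: no Kraus representations, no complementary channel, no appeal to minimality or to the double-complement fact, and the argument works verbatim for any completely positive trace-preserving map. What the paper's approach buys is narrative coherence: it exhibits the theorem as a literal consequence of the correction/privacy complementarity that is the subject of the paper, at the cost of importing the Stinespring machinery. It is worth noting that your range--kernel duality is exactly the content hiding inside Lemma~\ref{privlemma} (there computed explicitly via the Kraus form of $\Phi^C$), so the two proofs are dual presentations of the same pairing.
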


\begin{proof}
This can be proved by combining Theorem~\ref{testable} and Proposition~\ref{perfectcase} with Lemma~\ref{privlemma}, and use the fact (see chapter~6 of \cite{holevo}) that $(\Phi^C)^C$ is isometrically equivalent to $\Phi$.
\end{proof}

We can explicitly connect these private structures with the corresponding structures from error correction, the subject of which we now turn. For brevity we shall consider the correctable/private ($Q=I$) case. 

\begin{definition}
The {\bf multiplicative domain}, $\mathcal{M}(\Phi)$, of a channel $\Phi : \mathcal L(\mathcal H) \rightarrow \mathcal L(\mathcal H)$ is the set (in fact an algebra) given by:
$$\mathcal{M}(\Phi)=\{A \in \mathcal L(\mathcal H): \Phi(AX) = \Phi(A)\Phi(X);  \Phi(XA) = \Phi(X)\Phi(A) \  \forall X\},$$
where $X$ is taken from $\mathcal L(\mathcal H)$. The multiplicative domain is the largest set on which the restriction of $\Phi$ is a $\ast$-homomorphism (i.e., a representation).

Given a subalgebra $\mathcal A \subseteq \mathcal L(\mathcal H)$ and a representation $\pi : \mathcal A \rightarrow \mathcal L(\mathcal H)$ (that is, $\pi$ is linear, 
$\pi(AB) = \pi(A)\pi(B)$ and $\pi(A)^* = \pi(A^*)$ for all $A,B\in\mathcal A$), we may also define {\bf generalized multiplicative domains} as follows:
$$\mathcal{M}_{\pi}(\Phi) = \{A \in \mathcal A: \Phi(AX) = \pi(A)\Phi(X);  \Phi(XA) = \Phi(X)\pi(A) \  \forall X\in \mathcal L(\mathcal H)\}.$$
\end{definition}

The following quantum error correction result was established for subsystem codes in \cite{johnston}, and here we show that it extends to OAQEC. Our proof is built on techniques from \cite{johnston} and error correction constructions from \cite{bkk1,bkk2}.

\begin{theorem}\label{mdpicorrectable}
Let $\mathcal A$ be a subalgebra of $\mathcal L(\mathcal H)$ and let $\Phi$ be a channel on $\mathcal L(\mathcal H)$. Then $\mathcal A$ is correctable for $\Phi$ if and only if  $\mathcal A = \mathcal M_\pi (\Phi)$ for some representation $\pi : \mathcal A \rightarrow \mathcal L(\mathcal H)$.
\end{theorem}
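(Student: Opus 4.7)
The plan is to use Theorem~\ref{testable} (the KL-type conditions for correctable algebras) as the bridge between the two sides of the equivalence. The key intermediate observation is that the multiplicative domain identity $\Phi(AX) = \pi(A)\Phi(X)$ should be equivalent to an intertwining $V_i A = \pi(A) V_i$ at the level of Kraus operators, which, combined with the $*$-preservation of $\pi$, immediately yields $[V_j^* V_i, A] = 0$.

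For the converse direction ($\mathcal A = \mathcal M_\pi(\Phi) \Rightarrow \mathcal A$ correctable), I would fix a minimal Kraus decomposition $\Phi(\cdot) = \sum_i V_i (\cdot) V_i^*$ so that the $V_i$ are linearly independent in $\mathcal L(\mathcal H)$. Setting $W_i := V_i A - \pi(A) V_i$ and applying the multiplicative domain identity with $X = \kb{x}{y}$ gives $\sum_i \kb{W_i x}{V_i y} = 0$; pairing with $z \in \mathcal H$ yields $\sum_i \langle V_i y, z\rangle W_i x = 0$ for all $x, y, z$. Linear independence of the $V_i$ ensures the coefficients $\langle V_i y, z\rangle = \langle y, V_i^* z\rangle$ span $\mathbb{C}^n$ as $y, z$ vary, forcing each $W_i = 0$, i.e., $V_i A = \pi(A) V_i$ for all $i$. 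Taking adjoints and using $\pi(A^*) = \pi(A)^*$ gives $A V_j^* = V_j^* \pi(A)$; multiplying yields $V_j^* V_i A = A V_j^* V_i$ for all $i,j$ and $A \in \mathcal A$. Since the KL condition is independent of the chosen Kraus family, Theorem~\ref{testable} then delivers correctability.

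For the forward direction ($\mathcal A$ correctable $\Rightarrow \mathcal A = \mathcal M_\pi(\Phi)$), the plan is to construct the representation $\pi$ explicitly from the correctability structure via a Knill--Laflamme normal form. From Theorem~\ref{testable}, we have $V_j^* V_i \in \mathcal A'$ for all $i, j$. Invoking the finite-dimensional structure theorem, write $\mathcal H = \bigoplus_k (\mathcal H_{L_k} \otimes \mathcal H_{R_k})$ with $\mathcal A = \bigoplus_k (I_{L_k} \otimes \mathcal L(\mathcal H_{R_k}))$, so each $V_j^* V_i$ takes the block form $\bigoplus_k (F_{ji}^k \otimes I_{R_k})$. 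Then adapt the subsystem-code normal form of \cite{johnston}, together with the OAQEC error-correction constructions from \cite{bkk1,bkk2}, to obtain a decomposition of each Kraus operator in which it acts on the $R$-factors (code side) through isometries while acting freely on the $L$-factors (syndrome side), possibly permuting block indices along the way. The representation $\pi$ on an element $A = \bigoplus_k (I_{L_k} \otimes B_k) \in \mathcal A$ is then read off from this normal form so that the intertwining $V_i A = \pi(A) V_i$ holds by construction, at which point the multiplicative domain identities $\Phi(AX) = \pi(A)\Phi(X)$ and $\Phi(XA) = \Phi(X)\pi(A)$ follow by direct Kraus computation. The main obstacle lies in this last step: extending the subsystem-code normal form to the direct-sum OAQEC setting requires careful tracking of how Kraus operators can shuffle the classical index $k$ while still preserving $R$-factor quantum data, and verifying that the resulting $\pi$ assembles into a well-defined $*$-homomorphism on all of $\mathcal A$ rather than a family of incompatible partial representations on individual summands.
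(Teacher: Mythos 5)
Your first half (the implication $\mathcal A = \mathcal M_\pi(\Phi) \Rightarrow \mathcal A$ correctable) is correct, and it takes a genuinely different route from the paper: you extract the intertwining relations $V_iA = \pi(A)V_i$ by testing the identity $\Phi(AX)=\pi(A)\Phi(X)$ on rank-one operators and invoking linear independence of a minimal Kraus family, whereas the paper never assumes minimality and instead expands the positive sum $\sum_i (V_i^*\pi(A)-AV_i^*)(V_i^*\pi(A)-AV_i^*)^*$, shows via $\Phi^\dagger(\pi(A))=A$ and unitality of $\Phi^\dagger$ that it vanishes, and concludes each summand is zero. Both arguments are sound; yours needs the (harmless) reduction to a minimal Kraus family plus the observation that the commutation condition $[V_j^*V_i,A]=0$ depends only on $\spn\{V_j^*V_i\}$, while the paper's works for any Kraus representation at the cost of the slightly less transparent positivity trick.

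The other implication is where you have a genuine gap, and you have flagged it yourself: your plan to build $\pi$ from a Knill--Laflamme normal form adapted to the direct-sum decomposition $\mathcal A = \oplus_k (I_{L_k}\otimes \mathcal L(\mathcal H_{R_k}))$ leaves unresolved exactly the hard part, namely that the Kraus operators may permute the classical index $k$ and that the resulting block-wise isometries must be shown to assemble into a single well-defined $*$-homomorphism on all of $\mathcal A$. As written, this direction is a proposal, not a proof. The paper avoids the structure theory entirely: from $\mathcal A\subseteq \{V_i^*V_j\}'$ one computes directly that $R:=\Phi(I)$ commutes with $\Phi(A)$ and that $\Phi(A)\Phi(X)=R\,\Phi(AX)$ and $\Phi(X)\Phi(A)=\Phi(XA)R$, so that $\pi(A):=R^{-1/2}\Phi(A)R^{-1/2}$ (with $R^{-1}$ replaced by the pseudo-inverse after restricting to $(\ker R)^\perp$, on whose complement $\Phi$ vanishes identically) satisfies $\Phi(AX)=\pi(A)\Phi(X)$, $\Phi(XA)=\Phi(X)\pi(A)$, and $\pi(AB)=\pi(A)\pi(B)$ by a two-line computation. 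You should either carry out this explicit construction or fully work through the normal-form bookkeeping; at present the equivalence is only established in one direction.
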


\begin{proof}
First suppose $\mathcal A = \mathcal M_\pi (\Phi)$. So $\Phi(AX) = \pi(A)\Phi(X)$ for all $A\in \mathcal{A}$ and $X \in \mathcal L(\mathcal H)$,  $\pi(AB) = \pi(A)\pi(B)$ and $\pi(A)^* = \pi(A^*)$ for all $A,B\in \mathcal{A}$.
Then, since $\Phi$ is trace-preserving, we have
$$\mathrm{Tr}(AX) = \mathrm{Tr}(\Phi(AX)) = \mathrm{Tr}(\pi(A)\Phi(X)) = \mathrm{Tr}(\Phi^{\dagger}(\pi(A))X)$$ for all $X\in \mathcal L(\mathcal H)$ and hence $\Phi^{\dagger}(\pi(A)) = A$ for all $A\in \mathcal{A}$.

Now, let $A\in \mathcal A$ and observe since $\pi$ is a homomorphism and $\Phi^{\dagger}(\pi(A)) = A$, we have 
\begin{eqnarray*}
\Phi^{\dagger}(\pi(A)\pi(A^*)) - A\Phi^{\dagger}(\pi(A^*)) &-& \Phi^{\dagger}(\pi(A))A^* + AA^* \\ &=& AA^* - AA^* - AA^* + AA^* \\ &=& 0.
\end{eqnarray*}
However, observe this quantity is also equal to (recalling $\Phi^\dagger(Y) = \sum_i V_i^* Y V_i$ and $\Phi^{\dagger}$ is unital since $\Phi$ is trace-preserving) the following sum when fully expanded:  
$$\sum_i (V_i^* \pi(A) - A V_i^*)(V_i^*\pi(A) - AV_i^*)^* = \sum_i (V_i^*\pi(A) - A V_i^*)(\pi(A^*) V_i - V_i A^*).$$ 
Hence, it follows that each term in this sum is $0$, and so we must have (also using the fact that $\mathcal A$ is a self-adjoint set)
\[
V_i^*\pi(A) = A V_i^*  \quad \mathrm{and}  \quad \pi(A) V_j  = V_j A.
\]
Multiply the first equation on the right by $V_j$ and the second equation on the left by $V_i^*$ to obtain
\[
A V_i^*V_j = V_i^*\pi(A)V_j = V_i^*V_j A,
\]
and thus we have shown that  $\mathcal{A} \subseteq \{V_i^*V_j\}'$ and $\mathcal A$ is correctable for $\Phi$.

For the converse implication, assume that $\mathcal{A}\subseteq \{V_i^*V_j\}'$.  Let $R = \Phi(I) = \sum_i V_i V_i^*$; notice that
\begin{equation}
\Phi(A)R = \sum_{i,j} V_i A V_i^*V_j V_j^* = \sum_{i,j} V_i V_i^*V_j A V_j^* = R \Phi(A),
\end{equation}
and so $\Phi(A)$ commutes with any power of $R$ for all $A\in \mathcal A$.
Next, observe that \begin{equation}\label{almostmultiplicative1a}\Phi(A)\Phi(X)= \sum_{i,j} V_i A V_i^*V_j X V_j^* = \sum_i V_iV_i^*\sum_j V_j AX V_j^* = R \Phi(AX)\end{equation} and similarly,
\begin{equation}\label{almostmult2a}\Phi(X)\Phi(A) = \Phi(XA)R.\end{equation}
If $R$ is invertible, we then obtain
$$ \Phi(AX) = R^{-1}\Phi(A)\Phi(X) = R^{-1/2}\Phi(A)R^{-1/2}\Phi(X);$$
and,
$$\Phi(XA) = \Phi(X)\Phi(A)R^{-1} = \Phi(X)R^{-1/2}\Phi(A)R^{-1/2}.$$
Defining $\pi(A) = R^{-1/2}\Phi(A)R^{-1/2}$ we see that the above can be written as
\[
\Phi(AX) = \pi(A)\Phi(X) \quad \mathrm{and} \quad \Phi(XA) = \Phi(X)\pi(A),
\]
and we note that for any $A,B\in \mathcal{A}$,
\begin{align*}\pi(A)\pi(B) &= R^{-1/2}\Phi(A)R^{-1}\Phi(B)R^{-1/2} \\
& = R^{-1/2}\Phi(A)\Phi(B)R^{-3/2} \\
& = R^{-1/2}\Phi(AB)R^{-1/2}\\
& = \pi(AB)\end{align*}
where we have used the fact that $\Phi(B)$ commutes with all powers of $R$ and that $\Phi(A)\Phi(B) = \Phi(AB)R$. It follows that $\mathcal A = \mathcal M_\pi (\Phi)$ as required.

If $R$ is not invertible, we note that $\ker(R) = \cap_i \ker(V_i^*)$ and so if $R \ket{\psi} = 0$, then $\Phi(X)\ket{\psi} = \sum_i V_i X V_i^*\ket{\psi} = 0$ and $\bra{\psi} \Phi(X) = \sum_i \bra{\psi} V_i X V_i^* = 0$ as well. Hence, if we write $\mathcal{V} = \ker (R)$ and split $\mathcal H = \mathcal{V}^{\perp} \oplus \mathcal{V}$, according to this decomposition $R = Q \oplus 0$ with $Q$ invertible, and $\Phi(X) = \Psi(X) \oplus 0$. Hence, returning to Eq.~(\ref{almostmultiplicative1a}) we see that it can be written as
$$(\Psi(A)\Psi(X))\oplus 0 = (Q\Psi(AX))\oplus 0,$$ and if we multiply by $Q^{-1}\oplus 0$ we get
$$(Q^{-1}\Psi(A)\Psi(X))\oplus 0 = \Psi(AX) \oplus 0$$ and hence
\[
R^+\Phi(A)\Phi(X) = \Phi(AX)
\]
where $R^+$ is the pseudo-inverse of $R$. Similarly, we can do the same for Eq.~(\ref{almostmult2a}) and let $\pi(A) = (R^+)^{-1/2}\Phi(A)(R^+)^{-1/2}$ to get the desired result that $\mathcal A = \mathcal M_\pi (\Phi)$, and this completes the proof.
\end{proof}

\begin{remark}
Observe from the start of the above proof that any correctable algebra for $\Phi$ is contained in the range of $\Phi^\dagger$. (This was also observed from a different perspective in \cite{bkk2}.) We will use this fact in the next section.
\end{remark}

\begin{example}
As an illustration of this correspondence, consider a 4-qubit channel $\Phi$ that models noise given by the possibility of independent bit flips on the first three qubits, and so $\Phi$ has four Kraus operators (normalized with probabilities) $I$, $X_1 = X\otimes I \otimes I \otimes I$, and $X_2$, $X_3$ similarly defined with $X$ the Pauli bit flip operator ($X\ket{0}=\ket{1}$, $X\ket{1}=\ket{0}$). Consider the orthogonal single-qubit subspaces $\mathcal C_0 = \spn\{ \ket{0000},\ket{1111}\}$, $\mathcal C_1 = \spn \{ \ket{0001}, \ket{1110}\}$. Each of these subspaces is easily seen to be individually correctable for $\Phi$, but more than this, one can check that the hybrid algebra code defined by the subspaces, namely $\mathcal A = \mathcal L(\mathcal C_0) \oplus \mathcal L(\mathcal C_1)$, is correctable for $\Phi$. The theorem tells us therefore that the code algebra coincides with a generalized multiplicative domain for $\Phi$, $\mathcal A = \mathcal M_\pi (\Phi)$, and indeed, the proof also gives a recipe for constructing the representation: in this case, the representation $\pi : \mathcal A \rightarrow \mathcal L(\mathcal H)$ is implemented by the four Kraus operators $\{ P_{\mathcal C}, P_{\mathcal C_i}X_i , i=1,2,3\}$, where $\mathcal C = \mathcal C_1 \oplus \mathcal C_2$ and $P_{\mathcal C_i} =  X_i P_{\mathcal C} X_i^*$.
\end{example}

Combining the previous result with Theorem~\ref{privkerthm}, the complementary operator structure relationship is revealed as follows.

\begin{corollary}\label{corrpriv}
Let $\Phi$ be a channel on $\mathcal L(\mathcal H)$, and let $\pi$ be a representation associated with a correctable algebra for $\Phi$ (or equivalently a private algebra for $\Phi^C$). Then we have
\[
 \mathcal M_\pi (\Phi) = \big( \big( \ker (\Phi^C ) \big)^\perp  \big)^\prime .
\]
\end{corollary}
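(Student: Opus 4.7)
The plan is to identify both sides of the claimed equality with the commutant $\{V_j^* V_i\}'_{i,j}$ of the products of Kraus operators for $\Phi$, and then invoke Theorem~\ref{mdpicorrectable} to realise this commutant as a generalized multiplicative domain.

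First I would specialise Lemma~\ref{privlemma} to the case $Q = I$, which directly yields
\[
\big((\ker \Phi^C)^\perp\big)' = \{V_j^* V_i\}'_{i,j}.
\]
So the right-hand side of the corollary is just the commutant of the operators $V_j^* V_i$ inside $\mathcal L(\mathcal H)$.

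Next I would argue that the algebra $\mathcal N := \{V_j^* V_i\}'_{i,j}$ is itself a correctable algebra for $\Phi$ with respect to $I$, in fact the maximal one. This is immediate from Theorem~\ref{testable}: an algebra is correctable precisely when it is contained in $\{V_j^* V_i\}'_{i,j}$, so $\mathcal N$ satisfies the testable condition tautologically. Theorem~\ref{mdpicorrectable} then supplies a representation $\pi$ on $\mathcal N$ with $\mathcal N = \mathcal M_\pi(\Phi)$; the converse direction of the proof of that theorem provides an explicit recipe for $\pi$ in terms of $\Phi$ and $R = \Phi(I)$ (or its pseudo-inverse, when $R$ is singular), and this is the canonical representation associated with the correctable algebra that the corollary refers to.

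Chaining the two identifications gives $\mathcal M_\pi(\Phi) = \mathcal N = \big((\ker \Phi^C)^\perp\big)'$, which is the claim; the equivalent ``private for $\Phi^C$'' formulation then follows from Proposition~\ref{perfectcase}, which matches correctability for $\Phi$ with privacy for $\Phi^C$. I do not anticipate a genuine obstacle, since the corollary is essentially a bookkeeping assembly of Lemma~\ref{privlemma}, Theorem~\ref{testable}, and Theorem~\ref{mdpicorrectable} applied to the maximal correctable algebra. The only interpretive point worth care is that $\pi$ must be the representation defined on the full commutant $\mathcal N$ for the displayed equality to be literal: if one instead takes $\pi$ to be the representation attached to a strictly smaller correctable algebra $\mathcal A = \mathcal M_\pi(\Phi)$, then only the containment $\mathcal M_\pi(\Phi) \subseteq \big((\ker \Phi^C)^\perp\big)'$ survives.
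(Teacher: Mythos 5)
Your proof is correct and follows essentially the same route as the paper's: the right-hand side is identified with $\{V_i^*V_j\}'_{i,j}$ via Lemma~\ref{privlemma} (the content of Theorem~\ref{privkerthm} at $Q=I$), and the left-hand side via the converse construction in the proof of Theorem~\ref{mdpicorrectable}. Your closing caveat --- that the displayed equality is literal only when $\pi$ is the representation attached to the maximal correctable algebra $\{V_i^*V_j\}'_{i,j}$, with only the inclusion $\mathcal M_\pi(\Phi)\subseteq\big((\ker\Phi^C)^\perp\big)'$ surviving for a strictly smaller correctable algebra --- is a valid precision that the paper's one-line proof leaves implicit.
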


\begin{proof}
The forward inclusion follows from Theorem~\ref{privkerthm}, and the opposite inclusion follows from the second half of the proof of Theorem~\ref{mdpicorrectable}. 
\end{proof}

\begin{remark}
The simplest illustration of this relationship comes from the extreme case of a correctable/private pair, the case with $\Phi = \mbox{id}$ the identity channel on $\mathcal L(\mathcal H)$. Here $\Phi^C(\rho) = \tr(\rho)$ is the completely depolarizing channel, $\pi = \mbox{id}$, $Q=I$, and $\mathcal M_\pi (\Phi) = \mathcal L(\mathcal H)$. Moreover, $\ker \Phi^C$ is the operator subspace of trace-zero matrices, which is the trace-orthogonal complement of the identity operator $I$, and hence $(\ker\Phi^C)^\perp$ is the set of scalar multiples of the identity, with commutant equal to $\mathcal L(\mathcal H)$ as given by the result. For the example above, the specific form of the complement is not as straightforward, nevertheless the result yields information on it; namely, in that case $\ker \Phi^C$ can be explicitly computed via the relation $\mathcal L(\mathcal C_0) \oplus \mathcal L(\mathcal C_1)= ( ( \ker \Phi^C )^\perp )^\prime$.

We further note it would be interesting to extend this result to the general projection $Q$ case. This should be possible but there are some technical issues to overcome on how to define the multiplicative domains in that case. 
\end{remark}

\subsection{The Special Case of Unital Channels} We finish this section by continuing the analysis in the distinguished special case of unital channels ($\Phi(I) = I$). Many physically relevant channels satisfy this extra condition, such as the previous example. The relevant structures, in particular the multiplicative domains, have an especially nice characterization.

If $\Phi : \mathcal A \rightarrow \mathcal B$ is a completely positive and unital map between two algebras, then Choi \cite{choi1} proved that $\mathcal M(\Phi)$ has the following internal description:
\[
\mathcal M(\Phi) = \{ A\in \mathcal A \, : \, \Phi(A)^*\Phi(A) = \Phi(A^* A), \,\, \Phi(A) \Phi(A)^* = \Phi(AA^*)   \}.
\]
When trace preservation is added, so $\Phi(\rho) = \sum_i V_i \rho V_i^*$ is a unital channel, the fixed point theory for such maps \cite{kribsfixed} can be built upon to prove \cite{kribs-spekkens,c-j-k,miza} that $\mathcal M(\Phi)$ is equal to the commutant of the operators $V_i^*V_j$, it encodes all unitarily correctable algebras for $\Phi$, and the unital channel $\Phi^\dagger$ acts as a recovery operation; in terms of operator structures this is stated as:
\[
\mathcal M(\Phi) = \mathrm{UCC}\,(\Phi) = \{ V_i^* V_j \}^\prime = \mathrm{Fix}\, (\Phi^\dagger \circ \Phi).
\]

We can thus state the following result based on the above.

\begin{corollary}
If $\Phi(\rho) = \sum_i V_i \rho V_i^*$ is a unital channel, then
\[
\mathcal M(\Phi)=\{ V_i^* V_j \}^\prime = (\ker(\Phi^C)^{\perp})' .
\]
\end{corollary}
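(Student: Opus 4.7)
The plan is to assemble the identity from two ingredients that are already in place. The first equality $\mathcal M(\Phi) = \{V_i^* V_j\}'$ is precisely the display equation immediately preceding the statement, which summarizes the fixed-point and unital-channel results that are cited there (going back to Choi's internal characterization of the multiplicative domain, combined with the observation that for a unital $\Phi$ with Kraus operators $\{V_i\}$ the identity $\Phi(A)^*\Phi(A) = \Phi(A^*A)$ forces $V_i^*V_j$ to commute with $A$). So I would simply invoke this description as the starting point and not re-prove it.

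For the second equality, the natural route is to apply Lemma~\ref{privlemma} with $Q = I$. Since $\mathcal P_I = \mathrm{id}_{\mathcal L(\mathcal H)}$, the lemma specializes to
\[
\{V_j^* V_i\}'_{i,j} = \bigl(\bigl(\ker \Phi^C\bigr)^\perp\bigr)'.
\]
Because the family $\{V_i^* V_j\}_{i,j}$ is closed under taking adjoints, its commutant is the same set as the commutant of $\{V_j^* V_i\}_{i,j}$, so the left-hand side here matches the right-hand side of the first equality. Concatenating the two identities gives the desired chain $\mathcal M(\Phi) = \{V_i^* V_j\}' = ((\ker \Phi^C)^\perp)'$.

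An alternative route worth mentioning is a direct specialization of Corollary~\ref{corrpriv}. In the unital case, Choi's theorem shows that $\Phi$ restricts to a $\ast$-homomorphism on $\mathcal M(\Phi)$, and the fixed-point machinery ensures $\Phi^\dagger$ serves as a recovery channel, so $\mathcal M(\Phi)$ is correctable with representation $\pi = \Phi|_{\mathcal M(\Phi)}$. Corollary~\ref{corrpriv} then delivers $\mathcal M_\pi(\Phi) = ((\ker \Phi^C)^\perp)'$ immediately.

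There is no substantive obstacle in the argument; essentially all of the work has been done in Lemma~\ref{privlemma}, Theorem~\ref{mdpicorrectable}, and the classical unital-channel results recalled in the paragraph above the corollary. The only minor bookkeeping is to keep track of the indexing $V_i^*V_j$ versus $V_j^*V_i$ and to confirm that $Q=I$ is being used throughout, both of which are cosmetic.
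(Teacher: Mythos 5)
Your proposal is correct and matches the paper's intended argument: the paper states this corollary without proof, deriving it "based on the above," which is exactly your combination of the displayed identity $\mathcal M(\Phi)=\{V_i^*V_j\}'$ from the unital fixed-point theory with Lemma~\ref{privlemma} at $Q=I$. The indexing remark about $V_i^*V_j$ versus $V_j^*V_i$ is handled correctly, since the two families coincide as sets.
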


It is clear that the null space of a channel and its multiplicative domain cannot both be large.  This relationship can thus be quantified by the following result in the unital case.

\begin{corollary} Let $\Phi$ be a unital quantum channel on $\mathcal L(\mathcal H)$.  Then $$\dim(\mathcal M(\Phi))+\dim(\ker(\Phi))\le (\dim \mathcal H)^2$$ with equality if and only if $\Phi^\dagger \circ \Phi $ is a projection.
\end{corollary}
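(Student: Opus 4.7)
The plan is to translate the inequality into a statement about eigenspaces of the single positive semidefinite operator $T := \Phi^\dagger \circ \Phi$ acting on the Hilbert–Schmidt space $\mathcal L(\mathcal H)$. Writing $d = \dim \mathcal H$, so that $\dim \mathcal L(\mathcal H) = d^2$, the unital form of the multiplicative-domain description just recalled gives $\mathcal M(\Phi) = \mathrm{Fix}(\Phi^\dagger \circ \Phi) = \ker(T - I)$, the $1$-eigenspace of $T$. I would next observe that $\ker(\Phi) = \ker(T)$: if $TX = 0$ then $0 = \langle X, TX \rangle = \|\Phi(X)\|_2^2$, and the reverse inclusion is immediate. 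Thus both spaces appearing on the left-hand side of the claim are eigenspaces of $T$, for the eigenvalues $1$ and $0$ respectively.

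Next I would verify that $T$ is a positive self-adjoint contraction on $\mathcal L(\mathcal H)$ with respect to the trace inner product. Self-adjointness and positivity are immediate from the expression $T = \Phi^\dagger \Phi$. For contractivity I would invoke the Kadison–Choi inequality: since $\Phi$ is unital and completely positive, $\Phi(X)^*\Phi(X) \le \Phi(X^*X)$, whence $\|\Phi(X)\|_2^2 \le \operatorname{Tr}(\Phi(X^*X)) = \operatorname{Tr}(X^*X) = \|X\|_2^2$ using trace preservation. Hence the spectrum of $T$ lies in $[0,1]$, its $0$-eigenspace and $1$-eigenspace are mutually orthogonal subspaces of $\mathcal L(\mathcal H)$, and consequently
\begin{equation*}
\dim \mathcal M(\Phi) + \dim \ker(\Phi) = \dim \ker(T - I) + \dim \ker(T) \le \dim \mathcal L(\mathcal H) = d^2.
\end{equation*}

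For the equality case, I would note that equality above is equivalent to $\mathcal L(\mathcal H) = \ker(T - I) \oplus \ker(T)$; by the spectral theorem applied to the self-adjoint operator $T$ with spectrum in $[0,1]$, this is in turn equivalent to $\mathrm{spec}(T) \subseteq \{0,1\}$, i.e.\ to $T = \Phi^\dagger \circ \Phi$ being an orthogonal projection on $\mathcal L(\mathcal H)$. The only nontrivial ingredient is the Hilbert–Schmidt contractivity of $\Phi$, which relies on both hypotheses (unitality and trace preservation) simultaneously; everything after that is linear algebra for a positive self-adjoint operator, so no substantive obstacle remains.
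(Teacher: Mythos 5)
Your proof is correct and follows essentially the same route as the paper's: both identify $\mathcal M(\Phi)$ and $\ker(\Phi)$ with the eigenvalue-$1$ and eigenvalue-$0$ eigenspaces of the positive self-adjoint operator $\Phi^\dagger\circ\Phi$ on $\mathcal L(\mathcal H)$, deduce the dimension bound from orthogonality of distinct eigenspaces, and characterize equality by the spectrum being $\{0,1\}$. Your explicit Kadison--Schwarz contractivity check is a nice extra detail the paper leaves implicit, but it is not a different argument.
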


\begin{proof} From the discussion above we know $X\in \mathcal M(\Phi)$ if and only if $X$ is an eigenvector of $\Phi^\dagger\circ \Phi $ corresponding to the eigenvalue one.  Similarly $X\in \ker (\Phi)$ if and only if $X$ is an eigenvector of $\Phi^\dagger\circ \Phi $ corresponding to the eigenvalue zero.  The inequality follows from this.  The inequality becomes an equality if and only if all of the eigenvalues of the positive semidefinite operator $\Phi^\dagger\circ \Phi $ are zero and one, which occurs if and only if $\Phi$ is a projection. \end{proof}

We note that since $\mathcal M(\Phi)$ is a unital von Neumann subalgebra of some $M_n$, the projection onto $\mathcal M(\Phi)$ would be the trace-preserving conditional expectation onto $\mathcal M(\Phi)$, which is the unique channel $\Phi_{\mathcal A}$ satisfying:
\begin{enumerate}
\item $\Phi_{\mathcal{A}}(A)=A \quad \forall A\in\mathcal{A}$
\item $\Phi_{\mathcal{A}}(A_1XA_2)=A_1\Phi_{\mathcal{A}}(X)A_2 \quad \forall A_1,A_2\in\mathcal{A}, \quad \forall X\in M_n$
\end{enumerate}
Among all unital quantum channels with a given multiplicative domain $\mathcal{A}$, the trace-preserving conditional expectation onto $\mathcal{A}$ has the largest possible nullspace.

We conclude this section by deriving some relations on the behaviour of the complementary channel and a channel's multiplicative domain in the unital case.

 \begin{proposition}
 Let $\Phi$ be a unital channel on $\mathcal L(\mathcal H)$. Then  for all $X\in \mathcal L(\mathcal H)$ and $A\in \mathcal M(\Phi)$, we have
 \[\Phi^C(AX)=\Phi^C(XA). \]
 \end{proposition}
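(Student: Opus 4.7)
The plan is to compute both sides directly using the explicit Kraus representation of the complementary channel from Section~2.1. Writing $\Phi(\rho) = \sum_i V_i \rho V_i^*$, we have
\[
\Phi^C(AX) = \sum_{i,j} \tr(AX V_j^* V_i)\, \kb{i}{j}
\quad \text{and} \quad
\Phi^C(XA) = \sum_{i,j} \tr(XA V_j^* V_i)\, \kb{i}{j},
\]
so the proposition reduces to the scalar identity $\tr(AX V_j^* V_i) = \tr(XA V_j^* V_i)$ for all $i,j$ and all $X \in \mathcal L(\mathcal H)$. Since the $\kb{i}{j}$ are a linearly independent family, verifying these scalar identities suffices.

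The key input I would invoke is the internal characterization of the multiplicative domain in the unital case displayed just above the proposition, namely $\mathcal M(\Phi) = \{V_i^* V_j\}'$. Given $A \in \mathcal M(\Phi)$, this commutant description means $A$ commutes with every product $V_j^* V_i$. A single application of cyclicity of trace then gives $\tr(AX V_j^* V_i) = \tr(X V_j^* V_i A) = \tr(XA V_j^* V_i)$, which is exactly what is needed.

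I do not expect any genuine obstacle here: the statement follows from combining three already-established ingredients, the Kraus formula for $\Phi^C$, the commutant description $\mathcal M(\Phi) = \{V_i^* V_j\}'$, and cyclicity of trace. The only subtlety worth flagging is that unitality enters in an essential way through the commutant description of $\mathcal M(\Phi)$ (which in the paper's exposition rests on the fixed-point theory cited from \cite{kribsfixed,kribs-spekkens,c-j-k,miza}), so this particular argument would not extend verbatim to arbitrary, non-unital channels without additional care.
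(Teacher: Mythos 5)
Your proposal is correct and follows essentially the same route as the paper's proof: both reduce the claim to the scalar identities $\tr(AXV_j^*V_i)=\tr(XAV_j^*V_i)$ via the Kraus formula for $\Phi^C$, and both establish these by combining cyclicity of the trace with the commutant description $\mathcal M(\Phi)=\{V_i^*V_j\}'$ valid for unital channels. Your closing remark about where unitality enters is also consistent with the paper's exposition.
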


\begin{proof}
We have $A\in \mathcal M(\Phi)$ if and only if $AV_i^*V_j=V_i^*V_jA$ for all $i,j$. So $\tr(AX V_i^*V_j) = \tr(X V_i^*V_j A) =  \tr(XA V_i^*V_j)$, and hence
\[
\Phi^C(AX) = \sum_{ij}\tr(AX V_i^*V_j) \kb{i}{j} =  \sum_{ij}\tr(XA V_i^*V_j) \kb{i}{j} =\Phi^C(XA).
\]
\end{proof}

This result has some interesting consequences.

\begin{corollary}
Let $\Phi$ be a unital channel on $\mathcal L(\mathcal H)$. If $\mathcal M(\Phi)$ is a von Neumann algebra factor then $\Phi^C (\mathcal M(\Phi))=\mathbb{C}I$.
\end{corollary}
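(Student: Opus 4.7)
The plan is to combine the preceding proposition with the classical fact that a finite-dimensional factor admits a unique tracial linear functional up to scalar. First, I would specialize the preceding proposition by restricting the free variable $X$ to lie in $\mathcal M(\Phi)\subseteq\mathcal L(\mathcal H)$ itself. This immediately yields $\Phi^C(AB)=\Phi^C(BA)$ for all $A,B\in\mathcal M(\Phi)$, so the restriction of $\Phi^C$ to $\mathcal M(\Phi)$ is tracial as an $\mathcal L(\mathcal H_C)$-valued linear map.

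Next, I would reduce to scalar-valued functionals. Fix an orthonormal basis $\{\ket{i}\}$ of $\mathcal H_C$ and set $\varphi_{ij}(A):=\bra{i}\Phi^C(A)\ket{j}$ for $A\in\mathcal M(\Phi)$. Each $\varphi_{ij}:\mathcal M(\Phi)\to\mathbb C$ is a linear functional inheriting the tracial identity $\varphi_{ij}(AB)=\varphi_{ij}(BA)$. Because $\mathcal M(\Phi)$ is a finite-dimensional factor, it is isomorphic to $M_k$ for some $k$, and the space of tracial linear functionals on $M_k$ is one-dimensional, spanned by the canonical matrix trace. Hence there are scalars $c_{ij}$ with $\varphi_{ij}(A)=c_{ij}\,\mathrm{tr}(A)$, and reassembling the matrix entries gives
\[
\Phi^C(A) \;=\; \mathrm{tr}(A)\cdot C, \qquad C \;:=\; \sum_{i,j} c_{ij}\,\ket{i}\!\bra{j}\in\mathcal L(\mathcal H_C),
\]
for every $A\in\mathcal M(\Phi)$. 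In particular, the image $\Phi^C(\mathcal M(\Phi))$ is the one-dimensional subspace $\mathbb C\,C$, and evaluating at $A=I$ (using that $\mathrm{tr}(I)=\dim\mathcal H\neq 0$) identifies $C$ as a nonzero scalar multiple of $\Phi^C(I)$.

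The main obstacle I anticipate is sharpening this one-dimensional conclusion into the stated equality $\mathbb C\,\Phi^C(I)=\mathbb C\,I_{\mathcal H_C}$. The tracial argument delivers a line spanned by $\Phi^C(I)$ cheaply, but showing that this specific operator is itself a scalar multiple of the identity calls on the unitary freedom in the minimal complementary channel noted in Section~2: the operator $\Phi^C(I)=\mathrm{tr}_A(VV^*)$ is the Gram matrix of a choice of minimal Kraus operators, and the factor hypothesis on $\mathcal M(\Phi)$ is what one uses to bring this Gram matrix into scalar form. This final identification step is the most delicate part of the argument; the rest is a direct consequence of the trace-uniqueness principle applied entry-wise to $\Phi^C|_{\mathcal M(\Phi)}$.
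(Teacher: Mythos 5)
Your completed steps are correct, and they amount to the same argument as the paper's, written in dual form. The paper takes a trace-zero $A$ in the factor $\mathcal M(\Phi)\cong M_k$, writes it as a single commutator $A=XY-YX$ with $X,Y\in\mathcal M(\Phi)$, and kills it using the preceding proposition; you instead apply the proposition entrywise and invoke uniqueness of the tracial functional on a factor. These are two packagings of the same fact about $M_k$ (trace-zero matrices are spanned by commutators), and both routes land on the identity $\Phi^C(A)=\tr(A)\,\Phi^C(I)/\dim\mathcal H$ for all $A\in\mathcal M(\Phi)$, i.e.\ a one-dimensional image spanned by $\Phi^C(I)$.

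On the step you flag as delicate: the paper's proof stops exactly where yours does. After noting that every element of $\mathcal M(\Phi)$ is a trace-zero element plus a multiple of $I$, it writes ``the result follows,'' with no argument that $\Phi^C(I)$ is scalar. In fact no such argument is available under the stated hypotheses. Take the one-qubit channel $\Phi(\rho)=\sum_a p_a\sigma_a\rho\sigma_a$ over the Pauli matrices with distinct nonzero $p_a$: it is unital, $\mathcal M(\Phi)=\{V_a^*V_b\}'=\mathbb{C}I$ is a factor, yet $\Phi^C(I)=2\,\mathrm{diag}(p_0,p_1,p_2,p_3)$ is not a multiple of the identity (tensoring with an identity channel on a second qubit gives the same phenomenon with $\mathcal M(\Phi)\cong M_2$ nontrivial). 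So the conclusion that is actually provable --- and that both you and the paper establish --- is $\Phi^C(\mathcal M(\Phi))=\mathbb{C}\,\Phi^C(I)$, equivalently that $\mathcal M(\Phi)$ is privatized to the state $\Phi^C(I)/\dim\mathcal H$ in the sense of Section~4. Your instinct that the final identification is the hard part was sound, but it is an overstatement in the corollary rather than a gap in your argument; do not try to close it via the unitary freedom in the minimal dilation, since that freedom only conjugates $\Phi^C(I)$ and cannot make a non-degenerate spectrum scalar.
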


\begin{proof}  Suppose $A\in \mathcal M(\Phi)$ with $tr(A)=0$.  Since $\mathcal M(\Phi)$ is a von Neumann algebra factor and hence isomorphic to a matrix algebra, there exists $X,Y\in \mathcal M(\Phi)$ such that $A=XY-YX$ and thus $\Phi^C(A)=\Phi^C(XY)-\Phi^C(YX)=0$.  Since every element of $\mathcal M(\Phi)$ is the sum of a trace zero element of $\mathcal M(\Phi)$ and a multiple of the identity, the result follows.
\end{proof}

\begin{corollary}\label{commutcor}
Let $\Phi$ be a unital channel. Then the set $\Phi^C(\mathcal M(\Phi))$ commutes with the set $\Phi^C(\mathcal M(\Phi^C))$.
\end{corollary}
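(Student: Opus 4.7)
My plan is to leverage the preceding proposition together with the definition of the multiplicative domain of $\Phi^C$, so no computation with Kraus operators is needed.

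First, I would fix $A \in \mathcal M(\Phi)$ and $B \in \mathcal M(\Phi^C)$ and note that both live in $\mathcal L(\mathcal H)$ (since $\Phi$ and $\Phi^C$ share the same domain), so the products $AB$ and $BA$ make sense. I would then apply the previous proposition (using that $\Phi$ is unital) to the element $A \in \mathcal M(\Phi)$ with the ``test element'' taken to be $B \in \mathcal L(\mathcal H)$. This yields the symmetry
\[
\Phi^C(AB) = \Phi^C(BA).
\]

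Next, I would use the defining property of the multiplicative domain of $\Phi^C$: for $B \in \mathcal M(\Phi^C)$ and any $X \in \mathcal L(\mathcal H)$ we have $\Phi^C(XB) = \Phi^C(X)\Phi^C(B)$ and $\Phi^C(BX) = \Phi^C(B)\Phi^C(X)$. Specializing to $X = A$ gives
\[
\Phi^C(AB) = \Phi^C(A)\Phi^C(B), \qquad \Phi^C(BA) = \Phi^C(B)\Phi^C(A).
\]
Combining these two identities with the symmetry from the first step forces $\Phi^C(A)\Phi^C(B) = \Phi^C(B)\Phi^C(A)$. Since $A$ and $B$ were arbitrary in their respective multiplicative domains, the sets $\Phi^C(\mathcal M(\Phi))$ and $\Phi^C(\mathcal M(\Phi^C))$ commute elementwise, which is the claim.

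There is essentially no obstacle here beyond correctly lining up the two ingredients; the only subtlety worth flagging is that $\Phi^C$ need not be unital, so one should not appeal to Choi's unital characterization of $\mathcal M(\Phi^C)$. The proof uses only the bare definition of $\mathcal M(\Phi^C)$ (multiplicativity on the left and right), which is exactly what is given, and the previously proven proposition (which requires only that $\Phi$ is unital).
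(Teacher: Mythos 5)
Your proof is correct and is essentially identical to the paper's: both combine the preceding proposition ($\Phi^C(AX)=\Phi^C(XA)$ for $A\in\mathcal M(\Phi)$) with the defining multiplicativity of $\mathcal M(\Phi^C)$ to get the chain $\Phi^C(A)\Phi^C(X)=\Phi^C(AX)=\Phi^C(XA)=\Phi^C(X)\Phi^C(A)$. Your side remark that only the bare definition of $\mathcal M(\Phi^C)$ is needed (not Choi's unital characterization) is a fair and accurate observation.
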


\begin{proof}
Let $A\in \mathcal M(\Phi )$ and $X\in \mathcal M(\Phi^C )$.  Then from the previous result and the multiplicative domain definition, we have $\Phi^C(A)\Phi^C(X)=\Phi^C(AX)=\Phi^C(XA)=\Phi^C(X)\Phi^C(A)$.
\end{proof}

\begin{remark}
Regarding the generalized multiplicative domains, in the case that $\Phi$ is a unital channel, we have $\mathcal{M}(\Phi) = \{V_i^*V_j\}'$. Hence in this case, all generalized multiplicative domains associated with unital subalgebras lie inside the actual multiplicative domain. Also note that if $\Phi(\rho) = \sum_i V_i \rho V_i^*$ is a channel such that the $V_i = (\dim\mathcal H)^{-1/2} U_i$, with $\{ U_i \}$ a set of unitaries that are mutually orthogonal in the trace inner product, then of course $\Phi$ is unital. But also observe that $\Phi^C$ is unital as well:
\[
\Phi^C(I) = \sum_{ij}\tr(V_i^*V_j) \kb{i}{j} =   \sum_{i}\tr(V_i^*V_i) \kb{i}{i} = \sum_i \kb{i}{i} = I .
\]
In particular, in the above results the roles of $\Phi$ and $\Phi^C$ can be interchanged.

An interesting example of this arises when $\Phi$ is the conditional expectation onto the diagonal matrices.  In this case, $\Phi^C=\Phi$ which means that $\Phi(\mathcal M(\Phi))=\Phi^C(\mathcal M(\Phi))=\Phi^C(\mathcal M(\Phi^C))$ must be contained in an abelian subalgebra by Corollary \ref{commutcor}.  An easy calculation show that this is indeed the case with $\Phi(\mathcal M(\Phi))$ being the algebra of diagonal matrices.
\end{remark}

\section{Operator Algebra Inequalities and the Correction vs Privacy Trade-Off}

In this section we build on the analysis above to further quantify complementarity, through inequalities determined by the sizes of the relevant operator algebras corrected or privatized by channels. To simplify the presentation we shall use matrix notation for the algebras and we will focus on the original basic notion of privacy, where an algebra is mapped to a single state: Given a channel $\Phi:M_n\rightarrow M_m$ and subalgebra $\mathcal A$, we suppose there is a density operator $\rho$ such that
$$\Phi(A) = \mathrm{Tr}(A) \rho \quad\quad \forall A\in \mathcal A .$$
In such a situation, we shall say $\mathcal A$ is {\it privatized to a state} by $\Phi$.

Up to unitary equivalence our algebras, say contained in $M_n$, have the form
$$\mathcal{A} = \bigl(\oplus_{k=1}^N I_{m_k}\otimes M_{n_k}\bigr) \oplus 0_K, $$ with $\sum_k m_kn_k +K = n$. The commutant of $\mathcal{A}$ is, up to the same unitary similarity,
$$\mathcal{A}' = \bigl( \oplus_{k=1}^N M_{m_k} \otimes I_{n_k}\bigr) \oplus M_K.$$
Note that $$\mathcal{A} \cap \mathcal{A}' = \bigl(\oplus_{k=1}^N \C I_{m_k}\otimes I_{n_k}\bigr) \oplus 0_K$$ and that the dimension of this algebra (as an operator space) is
$$\mathrm{dim}(\mathcal{A}\cap \mathcal{A}') = N.$$
Finally, we note that $\mathcal{A}$ has a largest central projection $P_{\mathcal A}$; up to the same unitary similarity as before,
$$P_{\mathcal A} = \bigl( \oplus_{k=1}^N I_{m_k}\otimes I_{n_k}\bigr) \oplus 0_K.$$ This projection satisfies:
$P_{\mathcal A} A = A P_{\mathcal A} = A$ for all $A \in \mathcal{A}$. Note that $\mathcal{A}$ is a unital algebra if and only if $P_{\mathcal A}=I_n$ if and only if $K=0$. We shall also focus on the unital algebra case in this section.

The next two results refer to the notion of quasiorthogonal algebras. We point the reader to \cite{ljklp} for more on the notion and its connections with privacy.

\begin{definition} Two unital subalgebras $\mathcal{A}, \mathcal{B}\subseteq M_n$ are said to be {\bf quasiorthogonal} if $\tr(AB)=n^{-1}\tr(A)\tr(B)$ for all $A\in\mathcal{A} $ and all $B\in\mathcal{B }$.
\end{definition}

\begin{lemma}\label{quasilem} An algebra $\mathcal{B}$ is privatized to a state by $\Phi$ if and only if $$\tr(B\Phi^{\dagger}(X)) = \tr(B)\tr(\rho X)$$ for all $X\in M_n$. If $\mathcal{B}$ is unital, this is equivalent to the quasiorthogonality of $\mathcal{B}$ and $\mathrm{range}(\Phi^{\dagger})$.
\end{lemma}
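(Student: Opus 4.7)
The plan is to prove the two equivalences in sequence, with the first serving as the pivot for the second.

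For the first equivalence, I would take the defining condition $\Phi(B) = \tr(B)\rho$ (an equality in $M_m$) and pair both sides with an arbitrary $X \in M_m$ via the Hilbert--Schmidt inner product to get $\tr(X\Phi(B)) = \tr(B)\tr(X\rho)$ for all $X$. The defining property of the adjoint $\tr(X\Phi(B)) = \tr(\Phi^{\dagger}(X) B)$ then immediately yields the stated characterization $\tr(B\Phi^{\dagger}(X)) = \tr(B)\tr(\rho X)$, and all steps are reversible since $X$ ranges over all of $M_m$ (so if the scalar pairings agree for all $X$, the operators are equal).

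For the second equivalence, assume $\mathcal{B}$ is unital so that $I_n \in \mathcal{B}$. In the forward direction, substitute $B = I_n$ into the first characterization to obtain $\tr(\Phi^{\dagger}(X)) = n\,\tr(\rho X)$. Plugging this back into the identity for a general $B \in \mathcal{B}$ yields
\[
\tr(B\Phi^{\dagger}(X)) = \tr(B)\tr(\rho X) = n^{-1}\tr(B)\tr(\Phi^{\dagger}(X)),
\]
which is precisely the quasiorthogonality of $\mathcal{B}$ and $\mathrm{range}(\Phi^{\dagger})$ (noting that $\mathrm{range}(\Phi^{\dagger})$ is a unital subalgebra-like structure containing $\Phi^\dagger(I_m) = I_n$ by trace preservation, and quasiorthogonality in the sense of the definition is inherited from the scalar identity on a spanning set).

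For the reverse direction, given quasiorthogonality, the candidate state must be recovered from the channel itself: set $\rho := n^{-1}\Phi(I_n)$. This is a density operator because $\Phi$ is completely positive ($\rho \geq 0$) and trace preserving ($\tr(\rho) = n^{-1}\tr(I_n) = 1$). Then
\[
\tr(\rho X) = n^{-1}\tr(\Phi(I_n)X) = n^{-1}\tr(I_n\,\Phi^{\dagger}(X)) = n^{-1}\tr(\Phi^{\dagger}(X)),
\]
so the quasiorthogonality identity $\tr(B\Phi^{\dagger}(X)) = n^{-1}\tr(B)\tr(\Phi^{\dagger}(X))$ becomes exactly $\tr(B\Phi^{\dagger}(X)) = \tr(B)\tr(\rho X)$, which by the first equivalence is privatization to the state $\rho$.

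There is no real obstacle here — the proof is a direct application of the adjoint relation together with the observation that in the unital case the privatizing state is forced to be $\rho = n^{-1}\Phi(I_n)$. The only care needed is to ensure $\mathrm{range}(\Phi^{\dagger})$ plays the role of the second algebra properly in the quasiorthogonality definition; since $\Phi^{\dagger}$ is unital whenever $\Phi$ is trace preserving, $I_n \in \mathrm{range}(\Phi^{\dagger})$, and the scalar identity extends from $\Phi^\dagger(X)$ to linear combinations thereof by linearity.
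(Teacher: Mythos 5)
Your proof is correct and follows essentially the same route as the paper's: the first equivalence is the adjoint relation $\tr(X\Phi(B))=\tr(\Phi^{\dagger}(X)B)$, and the second hinges on the observation that unitality forces $\rho = n^{-1}\Phi(I)$, turning $\tr(B)\tr(\rho X)$ into $n^{-1}\tr(B)\tr(\Phi^{\dagger}(X))$. The paper's version is terser (it dismisses the first part as trivial and only writes the forward computation for the second), so your explicit treatment of the converse, where $\rho$ is constructed as $n^{-1}\Phi(I)$ and checked to be a state, is a welcome but not substantively different elaboration.
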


\begin{proof} The first statement is trivial. For the second, if $I\in \mathcal{B}$, then $\Phi(I) = n\rho$ and so $\rho = n^{-1}\Phi(I)$. Hence we have
$\tr(B\Phi^{\dagger}(X)) = n^{-1}\tr(B)\tr(\Phi(I)X) = n^{-1}\tr(B)\tr(\Phi^{\dagger}(X)).$
\end{proof}

Recall in the previous section we saw that a correctable algebra for $\Phi$ must lie in the range of $\Phi^{\dagger}$.
 The example of $\Phi_\mathcal{A}$, the trace-preserving conditional expectation onto a unital algebra $\mathcal{A}$ is instructive. Since $\Phi_\mathcal{A}= \Phi_\mathcal{A}^{\dagger}$, the range of $\Phi_\mathcal{A}^{\dagger}$ is $\mathcal{A}$ which clearly is a correctable algebra for $\Phi_\mathcal{A}$ as a subalgebra of its fixed point set. It was noted by Petz that two subalgebras $\mathcal{A}$ and $\mathcal{B}$ are quasiorthogonal if and only if $\Phi_\mathcal{A}(B)$ is a multiple of the identity for all $B\in \mathcal{B}$ \cite[Theorem 3]{petz2007complementarity}. Hence a unital algebra $\mathcal{B}$ will be privatized by $\Phi_\mathcal{A}$ if and only if it is quasiorthogonal to $\mathcal{A}$. This observation coupled with Lemma~\ref{quasilem} gives us the following characterization of correctable/privatized algebra pairs.

 \begin{theorem} \label{quasithm} Let $\mathcal{A}$ and $\mathcal{B}$ be unital subalgebras of $M_n$.  Then there exists a quantum channel that corrects $\mathcal{A}$ and privatizes $\mathcal{B}$ to a state if and only if $\mathcal{A}$ and $\mathcal{B}$ are quasiorthogonal. \end{theorem}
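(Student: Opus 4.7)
The plan is to prove the two implications separately, with each direction reducing to results already established in the paper. The forward direction will combine the remark after Theorem~\ref{mdpicorrectable} (that any correctable algebra lies in $\mathrm{range}(\Phi^{\dagger})$) with Lemma~\ref{quasilem}; the converse direction will exhibit an explicit channel, namely the trace-preserving conditional expectation onto $\mathcal{A}$, and invoke Petz's characterization of quasiorthogonality cited just before the theorem.

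For the forward direction, suppose $\Phi : M_n \to M_m$ corrects the unital algebra $\mathcal{A}$ and privatizes the unital algebra $\mathcal{B}$ to a state $\rho$. The remark following Theorem~\ref{mdpicorrectable} gives $\mathcal{A} \subseteq \mathrm{range}(\Phi^{\dagger})$. Lemma~\ref{quasilem} then gives the quasiorthogonality identity
\[
\tr(B\,\Phi^{\dagger}(X)) \;=\; n^{-1}\tr(B)\tr(\Phi^{\dagger}(X))
\]
for all $B \in \mathcal{B}$ and all $X \in M_n$. Specializing $X$ so that $\Phi^{\dagger}(X) = A$ for an arbitrary $A \in \mathcal{A}$ yields $\tr(AB) = n^{-1}\tr(A)\tr(B)$, i.e.\ $\mathcal{A}$ and $\mathcal{B}$ are quasiorthogonal.

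For the converse, assume $\mathcal{A}$ and $\mathcal{B}$ are quasiorthogonal unital subalgebras of $M_n$ and set $\Phi := \Phi_{\mathcal{A}}$, the trace-preserving conditional expectation onto $\mathcal{A}$. Because $\Phi_{\mathcal{A}}(A) = A$ for all $A \in \mathcal{A}$, the algebra $\mathcal{A}$ lies in the fixed-point set of $\Phi_{\mathcal{A}}$, so taking the recovery map to be $\Phi_{\mathcal{A}}$ itself (which is self-dual in the trace inner product) satisfies the definition of correctability. By Petz's theorem quoted before the statement, quasiorthogonality is equivalent to $\Phi_{\mathcal{A}}(B) \in \mathbb{C}\,I$ for all $B \in \mathcal{B}$; combined with trace preservation this forces $\Phi_{\mathcal{A}}(B) = n^{-1}\tr(B)\,I$, so $\mathcal{B}$ is privatized by $\Phi_{\mathcal{A}}$ to the maximally mixed state $I/n$.

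There is no real obstacle in this argument — both directions reduce to assembling three previously proven facts. The only minor subtlety is that Lemma~\ref{quasilem} uses quasiorthogonality between the algebra $\mathcal{B}$ and the operator system $\mathrm{range}(\Phi^{\dagger})$ rather than between two algebras, but since the condition is purely an inner-product identity, it restricts without loss to the subset $\mathcal{A} \subseteq \mathrm{range}(\Phi^{\dagger})$.
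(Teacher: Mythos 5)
Your argument is correct and follows essentially the same route as the paper: the forward direction combines the observation that a correctable algebra lies in $\mathrm{range}(\Phi^{\dagger})$ with Lemma~\ref{quasilem}, and the converse uses the trace-preserving conditional expectation $\Phi_{\mathcal{A}}$ together with Petz's characterization of quasiorthogonality. The paper presents exactly this reasoning in the paragraph preceding the theorem statement, so there is nothing to add.
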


 We can thus prove the following.

\begin{corollary}\label{submultiplicativedims} Suppose $\mathcal{A}$ is a correctable algebra for a channel $\Phi:M_n\rightarrow M_m$, and $\mathcal{B}$ is a unital algebra privatized to a state by $\Phi$. Then $$\mathrm{dim}(\mathcal{A})\mathrm{dim}(\mathcal{B})\leq n^2.$$
\end{corollary}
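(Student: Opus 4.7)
The plan is to reduce the inequality to a dimension count using an explicit linear map from $\mathcal{A}\otimes\mathcal{B}$ into $M_n$. First, since $\Phi$ corrects the (unital) algebra $\mathcal{A}$ and privatizes the unital algebra $\mathcal{B}$ to a state, Theorem~\ref{quasithm} tells us that $\mathcal{A}$ and $\mathcal{B}$ are quasiorthogonal, that is, $\tr(AB)=n^{-1}\tr(A)\tr(B)$ for all $A\in\mathcal{A}$ and $B\in\mathcal{B}$. This single trace identity is the sole ingredient I will need from the hypotheses; everything else is algebraic manipulation inside $M_n$.

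The main step is to introduce the multiplication map
\[
\mu:\mathcal{A}\otimes\mathcal{B}\longrightarrow M_n,\qquad \mu(A\otimes B)=\sqrt{n}\,AB,
\]
extended linearly. To obtain $\dim(\mathcal{A})\dim(\mathcal{B})=\dim(\mathcal{A}\otimes\mathcal{B})\leq n^2=\dim(M_n)$, it suffices to show $\mu$ is injective, and I will do this by proving it is isometric with respect to the Hilbert--Schmidt inner products on both sides. On pure tensors, cyclicity of the trace gives
\[
\langle \mu(A_1\otimes B_1),\mu(A_2\otimes B_2)\rangle \;=\; n\,\tr\!\bigl(B_1^*A_1^*A_2 B_2\bigr) \;=\; n\,\tr\!\bigl((A_1^*A_2)(B_2 B_1^*)\bigr).
\]
Because $\mathcal{A}$ and $\mathcal{B}$ are $\ast$-algebras, $A_1^*A_2\in\mathcal{A}$ and $B_2 B_1^*\in\mathcal{B}$, so quasiorthogonality applies directly to this single product and yields
\[
n\cdot\tfrac{1}{n}\tr(A_1^*A_2)\tr(B_2 B_1^*) \;=\; \langle A_1,A_2\rangle\langle B_1,B_2\rangle \;=\; \langle A_1\otimes B_1,A_2\otimes B_2\rangle.
\]
Extending by sesquilinearity produces the isometry on all of $\mathcal{A}\otimes\mathcal{B}$, whence $\mu$ is injective and $\dim(\mathcal{A})\dim(\mathcal{B})\leq n^2$.

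The conceptual obstacle is recognizing that the correct object is the image $\mathcal{A}\cdot\mathcal{B}\subseteq M_n$ viewed as the image of the tensor-product multiplication map, rather than trying to extract the bound from trace-orthogonality $\mathcal{A}_0\perp\mathcal{B}_0$ in $M_n$ (which only delivers the weaker sum bound $\dim(\mathcal{A})+\dim(\mathcal{B})\leq n^2+1$). Once this map is in hand, the $\ast$-algebra closure of $\mathcal{A}$ and $\mathcal{B}$ under the products $A_1^*A_2$ and $B_2 B_1^*$ is precisely what promotes the scalar quasiorthogonality identity into the full factorization of the Gram form, and the isometry drops out with the rescaling by $\sqrt{n}$.
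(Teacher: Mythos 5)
Your proposal is correct and is essentially the paper's own argument: both invoke Theorem~\ref{quasithm} to get quasiorthogonality and then apply the identity $\tr\bigl((A_1^*A_2)(B_2B_1^*)\bigr)=n^{-1}\tr(A_1^*A_2)\tr(B_2B_1^*)$ to the products $AB$. The paper phrases this by choosing trace-orthonormal bases of $\mathcal{A}$ and $\mathcal{B}$ and checking that the $\dim(\mathcal{A})\dim(\mathcal{B})$ products $A_iB_j$ are mutually orthogonal in $M_n$, which is exactly the coordinate form of your statement that $\mu(A\otimes B)=\sqrt{n}\,AB$ is a Hilbert--Schmidt isometry, hence injective.
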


\begin{proof} 

Since $\mathcal{A}$ and $\mathcal{B}$ are quasiorthogonal by Theorem \ref{quasithm}, for all $A\in \mathcal{A}$ and $B\in \mathcal{B}$, we have
$\tr(AB) = n^{-1}\tr(A)\tr(B)$.  Let $\{A_i\}_{i=1}^{d_1}$ and $\{B_i\}_{i=1}^{d_2}$ be orthonormal bases (in the trace inner product) for $\mathcal{A}, \mathcal{B}$ respectively. Next form the set $\{A_iB_j\}_{i,j=1}^{d_1,d_2}$ which has $d_1d_2 = \dim (\mathcal{A})\dim(\mathcal{B})$ elements and observe that
\begin{align*}\tr(A_iB_j(A_kB_l)^*) & = \tr(A_k^*A_iB_j B_l^*) \\
& = n^{-1}\tr(A_k^*A_i)\tr(B_jB_l^*) \\
& = n^{-1}\delta_{ik}\delta_{jl}\end{align*} and so $\{A_iB_j\}_{i,j=1}^{d_1,d_2}$ is a set of mutually orthogonal matrices in $M_n$, and so must have dimension at most $n^2$.
\end{proof}

\begin{example}
As a simple example of a channel and algebras that saturate this inequality, consider an $N$-qubit system (so $n = 2^N$) and noise given by a channel $\Phi$ that completely depolarizes the first $k$ qubits and leaves the final $N-k$ qubits untouched. In this case, we have a correctable (noiseless in fact) algebra $\mathcal A$ that is unitarily equivalent to $M_{2^{N-k}}$ and a private algebra $\mathcal B$ that is privatized to the maximally mixed state of the first $k$ qubits and is unitarily equivalent to $M_{2^k}$. Here we thus have: $\dim (\mathcal A) \dim (\mathcal B) = 2^{2(N-k)} 2^{2k} = 2^{2N} = n^2$.
\end{example}

\begin{remark}
Theorem \ref{quasithm} suggests a way to quantify the complimentarity relations of two subalgebras. Indeed, in \cite{Wein}, a quantity $(c(\mathcal{A},\mathcal{B}))$ was defined for two unital subalgebras of a matrix algebra in terms of the trace of the composition of conditional expectations onto each of the subalgebras. It follows that $1\leq c(\mathcal{A},\mathcal{B})\leq \min\{\text{dim}(\mathcal{A}),\text{dim}(\mathcal{B})\}.$
Moreover, $\mathcal{A}$ and $\mathcal{B}$ are quasiorthogonal if and only if $c(\mathcal{A},\mathcal{B})=1$.
Now using this quantity one can measure how far away two subalgebras are from being quasiorthogonal. Via Theorem \ref{quasithm}, this allows us to see quantitatively how far away a pair of subalgebras are from being complimentary to each other, and we suggest this warrants further investigation.
\end{remark}

The following example illustrates the necessity of the unitality condition in Corollary~\ref{submultiplicativedims}.

\begin{example}\label{counterexample}  Let $\Phi:M_3\rightarrow M_3$ be the channel whose Kraus operators are $\begin{pmatrix} 1 & 0 & 0 \\ 0 & 1 & 0\\ 0 & 0 & 1 \end{pmatrix}, \begin{pmatrix} 0 & 0 & 0 \\ 0 & 0 & 0 \\ 0 & 1 & 0 \end{pmatrix}$, which acts by
$$\begin{pmatrix} a & b & c \\ d & e & f \\ g & h &i \end{pmatrix} \mapsto \begin{pmatrix} a & b & 0 \\ d & e + i & 0 \\ 0 & 0 & 0 \end{pmatrix}.$$
Observe that the algebra $\mathcal{A} = M_2\oplus 0$ is correctable for $\Phi$, since on this algebra, $\Phi$ acts as the identity. Moreover, $\mathcal{B} = 0\oplus M_2$ is private for this algebra, since
$$\Phi: \begin{pmatrix} 0 & 0 & 0 \\ 0 & a & b \\ 0 & c & d \end{pmatrix} \mapsto (a+d) \begin{pmatrix} 0 & 0 & 0 \\ 0 & 1 & 0 \\ 0 & 0 & 0 \end{pmatrix}.$$
Both $\mathcal{A}$ and $\mathcal{B}$ are of dimension $4$, and so we have $\mathrm{dim}(\mathcal{A})\mathrm{dim}(\mathcal{B}) = 16 \nleq 9$.
\end{example}

We next relate the commutants of correctable/private algebra pairs.

\begin{corollary}\label{dimineq} If $\mathcal{A}$ and $\mathcal{B}$ are unital algebras correctable and privatized to a state by $\Phi : M_n \rightarrow M_m$ respectively, then $$\dim(\mathcal{B}) \leq \dim(\mathcal{A}')\quad \mbox{and} \quad \dim(\mathcal{A})\leq \dim(\mathcal{B}').$$
\end{corollary}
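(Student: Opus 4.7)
The plan is to combine the submultiplicative dimension bound from Corollary~\ref{submultiplicativedims} with the classical reverse inequality $\dim(\mathcal{C})\,\dim(\mathcal{C}') \geq n^2$ valid for any unital $*$-subalgebra $\mathcal{C} \subseteq M_n$. Given both, the result is immediate: from $\dim(\mathcal{A})\dim(\mathcal{B}) \leq n^2 \leq \dim(\mathcal{A})\dim(\mathcal{A}')$ we read off $\dim(\mathcal{B}) \leq \dim(\mathcal{A}')$, and symmetrically the other inequality follows by using $n^2 \leq \dim(\mathcal{B})\dim(\mathcal{B}')$.

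To justify the reverse inequality, I would use the structure theory already invoked in Section~4. Write, up to unitary similarity,
\[
\mathcal{C} = \bigoplus_{k=1}^N I_{m_k}\otimes M_{n_k}, \qquad \mathcal{C}' = \bigoplus_{k=1}^N M_{m_k}\otimes I_{n_k},
\]
with $\sum_k m_k n_k = n$ (recalling that we are in the unital case). Then $\dim(\mathcal{C}) = \sum_k n_k^2$ and $\dim(\mathcal{C}') = \sum_k m_k^2$, and the Cauchy--Schwarz inequality applied to the vectors $(m_k)_k$ and $(n_k)_k$ yields
\[
n^2 = \Bigl(\sum_{k=1}^N m_k n_k\Bigr)^{\!2} \leq \Bigl(\sum_{k=1}^N m_k^2\Bigr) \Bigl(\sum_{k=1}^N n_k^2\Bigr) = \dim(\mathcal{C}')\,\dim(\mathcal{C}),
\]
with equality precisely when $(m_k)_k$ and $(n_k)_k$ are proportional.

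There is essentially no obstacle in this argument; the one point that requires care is ensuring the hypotheses of Corollary~\ref{submultiplicativedims} apply, namely that both $\mathcal{A}$ and $\mathcal{B}$ are unital so that Theorem~\ref{quasithm} (which rests on unitality through Lemma~\ref{quasilem}) is available. Since both algebras are unital by assumption, this is automatic. Example~\ref{counterexample} in the paper warns that the bound $\dim(\mathcal{A})\dim(\mathcal{B}) \leq n^2$ can fail without unitality, so the proposed proof cannot be extended to the non-unital setting without a separate argument.
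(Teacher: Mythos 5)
Your proof is correct and follows the same route as the paper: combine Corollary~\ref{submultiplicativedims} with the classical bound $n^2 \leq \dim(\mathcal{C})\dim(\mathcal{C}')$ for unital subalgebras and cancel. The only difference is that you also supply the (correct) Cauchy--Schwarz justification of that classical bound, which the paper simply cites as a known fact.
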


\begin{proof} The result follows from the theorem above and the fact that unital algebras satisfy
$n^2 \leq \dim(\mathcal{A})\dim(\mathcal{A}').$
\end{proof}

Note that Example \ref{counterexample} again serves as a reminder that unitality is necessary: in that example, $\mathrm{dim}(\mathcal{A}') = \mathrm{dim}(\mathcal{B}') = 3$ while $\mathrm{dim}(\mathcal{A}) = \mathrm{dim}(\mathcal{B}) = 4$.

We can also make a statement on the internal structures of correctable and private algebra pairs.

\begin{corollary} Suppose $\mathcal{A}$ and $\mathcal{B}$ are unital algebras correctable and privatized to a state by $\Phi$ respectively. The following are true:
\begin{enumerate}
\item If $\mathcal{A}$ contains a maximal abelian subalgebra, then $\mathcal{B}$ cannot, unless $\mathcal{A}$ is itself a maximal abelian subalgebra, in which case so is $\mathcal{B}$.
\item If $\mathcal{B}$ contains a maximal abelian subalgebra, then $\mathcal{A}$ cannot, unless $\mathcal{B}$ is itself a maximal abelian subalgebra, in which case so is $\mathcal{A}$.
\end{enumerate}
\end{corollary}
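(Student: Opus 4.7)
The plan is to deduce both parts of the Corollary from a single symmetric argument that combines the dimension inequalities already proved (Corollaries~\ref{submultiplicativedims} and~\ref{dimineq}) with a short structural characterization of unital subalgebras of $M_n$ that contain a maximal abelian subalgebra (MASA). Because Corollary~\ref{dimineq} supplies an inequality in each direction, the roles of $\mathcal{A}$ and $\mathcal{B}$ are interchangeable for this argument, so it suffices to establish the single statement: if both $\mathcal{A}$ and $\mathcal{B}$ contain a MASA of $M_n$, then each must itself be a MASA. Parts (1) and (2) are then immediate contrapositives.

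The first step I would carry out is a structural lemma: a unital subalgebra $\mathcal{A}\subseteq M_n$ contains a MASA of $M_n$ if and only if $\mathcal{A}'\subseteq \mathcal{A}$, equivalently if and only if $\mathcal{A}'$ is abelian. For the forward direction, if $\mathcal{M}$ is a MASA of $M_n$ contained in $\mathcal{A}$, then $\mathcal{M}=\mathcal{M}'$, and so $\mathcal{A}'\subseteq \mathcal{M}'=\mathcal{M}\subseteq \mathcal{A}$. For the converse, if $\mathcal{A}'$ is abelian, extend it to some MASA $\mathcal{M}$ of $M_n$; then by the double commutant theorem, $\mathcal{M}=\mathcal{M}'\subseteq \mathcal{A}''=\mathcal{A}$. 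Note also the companion fact that $\mathcal{A}$ is itself a MASA of $M_n$ if and only if $\mathcal{A}=\mathcal{A}'$.

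Now assume $\mathcal{A}$ and $\mathcal{B}$ both contain MASAs of $M_n$. The structural lemma gives $\mathcal{A}'\subseteq \mathcal{A}$ and $\mathcal{B}'\subseteq \mathcal{B}$, hence $\dim\mathcal{A}'\leq \dim\mathcal{A}$ and $\dim\mathcal{B}'\leq \dim\mathcal{B}$. Combining with the two inequalities of Corollary~\ref{dimineq}, namely $\dim\mathcal{A}\leq \dim\mathcal{B}'$ and $\dim\mathcal{B}\leq \dim\mathcal{A}'$, yields the cycle
\[
\dim\mathcal{A}\leq \dim\mathcal{B}'\leq \dim\mathcal{B}\leq \dim\mathcal{A}'\leq \dim\mathcal{A},
\]
so all four quantities coincide with a common value $d$. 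To pin down $d$ I would next apply two bounds: Corollary~\ref{submultiplicativedims} gives $d^2 = \dim\mathcal{A}\cdot\dim\mathcal{B}\leq n^2$, while the standard Cauchy--Schwarz estimate $\dim(\mathcal{A})\dim(\mathcal{A}')\geq n^2$ (the same fact invoked in the proof of Corollary~\ref{dimineq}) gives $d^2\geq n^2$. Thus $d=n$, and $\mathcal{A}'\subseteq \mathcal{A}$ with $\dim\mathcal{A}=\dim\mathcal{A}'$ forces $\mathcal{A}=\mathcal{A}'$, so $\mathcal{A}$ is a MASA; by symmetry the same holds for $\mathcal{B}$, completing the proof.

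The main obstacle is really just the structural lemma — specifically, recognizing that an algebra containing any MASA of the ambient space must absorb its own commutant. An alternative but less compact route is to argue directly from the decomposition $\mathcal{A}\cong \oplus_k I_{m_k}\otimes M_{n_k}$, verifying that $\mathcal{A}$ contains a MASA of $M_n$ precisely when every multiplicity $m_k=1$; however, the commutant formulation plugs more cleanly into the dimension chain above, and everything after it is a straightforward bookkeeping exercise on the already-established inequalities.
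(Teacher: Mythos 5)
Your proposal is correct and follows essentially the same route as the paper: both arguments hinge on the observation that the commutant of a unital algebra containing a MASA is abelian (equivalently, contained in the algebra), then feed this into the dimension inequalities of Corollary~\ref{dimineq} and analyze the equality case. Your version merely packages the final step via the chain $\dim\mathcal{A}\leq\dim\mathcal{B}'\leq\dim\mathcal{B}\leq\dim\mathcal{A}'\leq\dim\mathcal{A}$ together with $n^2\leq\dim\mathcal{A}\dim\mathcal{A}'$, where the paper directly uses that an abelian subalgebra of $M_n$ has dimension at most $n$ while an algebra containing a MASA has dimension at least $n$ --- a cosmetic difference.
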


\begin{proof}
This follows from the observation that the commutant of an algebra containing a maximal abelian subalgebra is abelian, and has dimension less than $n$, while the dimension of $\mathcal{A}$ itself is greater than $n$. The inequality in Corollary~\ref{dimineq}, and a consideration of the equality condition, give the result.
\end{proof}

Further recall from above that all unital correctable algebras $\mathcal{A}$ satisfy
$\mathcal{A} \subseteq \{V_i^*V_j\}'$, and hence
$\{V_i^*V_j\}'' \subseteq \mathcal{A}'.$ Hence, the smallest possible commutant we can put on the right side of the inequality from Corollary~\ref{dimineq} is $\dim (\{V_i^*V_j\}'')$, giving us the following result. (And recall by the von Neumann double commutant theorem, $\{ V_i^* V_j \}^{''}$ is equal to the algebra generated by the operators $V_i^* V_j$.)

\begin{corollary} If $\mathcal{B}$ is a unital algebra privatized to a state by $\Phi$, then $$\dim(\mathcal{B}) \leq \dim(\{V_i^*V_j\}'').$$
\end{corollary}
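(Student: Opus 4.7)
The plan is to apply the preceding Corollary~\ref{dimineq} with a carefully chosen correctable algebra $\mathcal{A}$. Since the hypothesis only provides the privatized algebra $\mathcal{B}$, I need to exhibit a unital algebra that is correctable for the same channel $\Phi$ and whose commutant is as small as possible, so that the inequality $\dim(\mathcal{B}) \leq \dim(\mathcal{A}')$ yields the desired bound.

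The natural candidate is $\mathcal{A} := \{V_i^* V_j\}'$, the commutant of the set of Kraus products. First I would verify that this choice satisfies the hypotheses of Corollary~\ref{dimineq}. Unitality is immediate, since the identity commutes with every operator. Correctability for $\Phi$ follows directly from the testable conditions in Theorem~\ref{testable}: the condition $[V_j^* V_i, X] = 0$ for all $X \in \mathcal{A}$ and all $i,j$ is precisely the defining property of $\{V_i^* V_j\}'$. Thus $\mathcal{A}$ is a unital correctable algebra for $\Phi$.

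With this choice in hand, Corollary~\ref{dimineq} applied to the pair $(\mathcal{A},\mathcal{B})$ gives
\[
\dim(\mathcal{B}) \leq \dim(\mathcal{A}') = \dim\bigl(\{V_i^* V_j\}''\bigr),
\]
which is exactly the claimed inequality. The identification $\mathcal{A}' = \{V_i^* V_j\}''$ is simply taking the commutant of both sides of $\mathcal{A} = \{V_i^* V_j\}'$, and the preceding remark observes that this double commutant coincides with the algebra generated by the operators $V_i^* V_j$.

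I do not anticipate any real obstacle here: the result is essentially a direct corollary of the previously established dimension inequality, with the only conceptual step being the recognition that $\{V_i^* V_j\}'$ is the largest unital algebra satisfying the Knill--Laflamme-type conditions of Theorem~\ref{testable}, and therefore yields the tightest bound obtainable from Corollary~\ref{dimineq} without any further knowledge of which specific correctable algebra is paired with $\mathcal{B}$ by $\Phi$.
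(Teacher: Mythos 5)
Your proof is correct and takes essentially the same route as the paper: the authors also derive this corollary by observing that $\{V_i^*V_j\}'$ is the largest unital correctable algebra for $\Phi$, so that applying Corollary~\ref{dimineq} with $\mathcal{A}=\{V_i^*V_j\}'$ yields $\dim(\mathcal{B})\leq \dim(\mathcal{A}')=\dim(\{V_i^*V_j\}'')$. Your explicit check that $\{V_i^*V_j\}'$ is unital and correctable via Theorem~\ref{testable} is precisely the step the paper leaves implicit.
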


\begin{remark}
We see then that, at least for unital algebras and privatizing to states, these inequalities exhibit an explicit and concrete trade-off between privacy and correction: if a large algebra is correctable for $\Phi$, the size of the largest privatized algebra is constrained to be small, and vice-versa.
\end{remark}

Recalling the Kraus operator description of $\Phi^C$, we finish by analyzing what happens for the complement when $\mathcal{A}$ is correctable for $\Phi$.

\begin{proposition}\label{comppriv} If $\mathcal{A}$ is correctable for $\Phi$, then 
\[
\mathrm{rank} (\Phi^C\bigl|_{\mathcal{A}}) \leq \dim ( \mathcal A \cap \mathcal A^\prime).
\] 
\end{proposition}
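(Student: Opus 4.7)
The plan is to exploit the testable conditions from Theorem~\ref{testable} (with $Q = I$), which say $V_j^*V_i \in \mathcal{A}'$ for all $i,j$ when $\mathcal{A}$ is correctable for $\Phi$, and to combine this with the trace-preserving conditional expectation onto $\mathcal{A}$ to pin down where the coefficients of $\Phi^C|_{\mathcal{A}}$ actually live.

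First I would recall the Kraus form of the complementary channel, so that for every $A \in \mathcal{A}$,
\[
\Phi^C(A) = \sum_{i,j} \tr(A V_j^* V_i)\, \kb{i}{j}.
\]
Let $E_{\mathcal{A}}: M_n \to \mathcal{A}$ denote the trace-preserving conditional expectation, i.e.\ the orthogonal projection onto $\mathcal{A}$ in the trace inner product. Since $A \in \mathcal{A}$, one has $\tr(AX) = \tr(A\, E_{\mathcal{A}}(X))$ for every $X \in M_n$, and so the $(i,j)$ coefficient above can be rewritten as $\tr(A\, E_{\mathcal{A}}(V_j^*V_i))$.

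Second, I would verify that $E_{\mathcal{A}}(V_j^*V_i) \in \mathcal{A} \cap \mathcal{A}'$. Membership in $\mathcal{A}$ is automatic. For the commutant part, recall that $E_{\mathcal{A}}$ is an $\mathcal{A}$-bimodule map, i.e.\ $E_{\mathcal{A}}(aYb) = a\, E_{\mathcal{A}}(Y)\, b$ for all $a,b \in \mathcal{A}$ and $Y \in M_n$. Applying this with $Y = V_j^*V_i \in \mathcal{A}'$ and any $a \in \mathcal{A}$ yields
\[
a\, E_{\mathcal{A}}(V_j^*V_i) = E_{\mathcal{A}}(a V_j^*V_i) = E_{\mathcal{A}}(V_j^*V_i \, a) = E_{\mathcal{A}}(V_j^*V_i)\, a,
\]
so $E_{\mathcal{A}}(V_j^*V_i) \in \mathcal{A}'$ as well.

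Third, I would read off the rank bound. The map $\Phi^C|_{\mathcal{A}} : \mathcal{A} \to M_d$ annihilates precisely those $A \in \mathcal{A}$ for which $\tr(A\, E_{\mathcal{A}}(V_j^*V_i)) = 0$ for all $i,j$, so its kernel is the trace-orthogonal complement (inside $\mathcal{A}$) of $\mathcal{S} := \spn\{E_{\mathcal{A}}(V_j^*V_i)\}_{i,j}$. Since $\mathcal{S} \subseteq \mathcal{A}$, we have $\operatorname{rank}(\Phi^C|_{\mathcal{A}}) = \dim(\mathcal{S})$, and by Step~2, $\mathcal{S} \subseteq \mathcal{A} \cap \mathcal{A}'$, giving
\[
\operatorname{rank}(\Phi^C|_{\mathcal{A}}) \leq \dim(\mathcal{A} \cap \mathcal{A}').
\]
The only nontrivial ingredient is the $\mathcal{A}$-bimodule property of the conditional expectation used in Step~2; this is standard for finite-dimensional von~Neumann algebras but warrants explicit invocation since it is what forces the image coefficients into the centre-like set $\mathcal{A} \cap \mathcal{A}'$.
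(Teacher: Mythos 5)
Your argument is correct, and it reaches the paper's conclusion by a genuinely different (more structural) route. Both proofs rest on the same first ingredient: correctability forces $V_j^*V_iQ$-compressions into $\mathcal{A}'$ via Theorem~\ref{testable}, and then one must explain why the functionals $A \mapsto \tr(AV_j^*V_i)$ on $\mathcal{A}$ only ``see'' an $\mathcal{A}\cap\mathcal{A}'$-worth of information. The paper does this concretely: it writes $\mathcal{A} \simeq \bigl(\oplus_k I_{m_k}\otimes M_{n_k}\bigr)\oplus 0_K$, so that $A \simeq \oplus_k (I_{m_k}\otimes A_k)\oplus 0_K$ and $QV_j^*V_iQ \simeq \oplus_k (V^{(k)}_{ji}\otimes I_{n_k})\oplus 0_K$, and computes $\Phi^C(A) = \sum_{k=1}^N \tr(A_k)\Lambda_k$ explicitly, exhibiting the image as spanned by $N = \dim(\mathcal{A}\cap\mathcal{A}')$ fixed matrices. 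You instead insert the trace-preserving conditional expectation $E_{\mathcal{A}}$ and use its $\mathcal{A}$-bimodule property to show $E_{\mathcal{A}}(V_j^*V_i)\in\mathcal{A}\cap\mathcal{A}'$, so that $\Phi^C|_{\mathcal{A}}$ factors through the pairing of $\mathcal{A}$ against a subspace of $\mathcal{A}\cap\mathcal{A}'$; this avoids the Wedderburn coordinates entirely and is arguably cleaner, while the paper's computation buys an explicit formula for $\Phi^C(A)$ in terms of the block traces $\tr(A_k)$ (which the subsequent remark uses to interpret the result as privatization of $\mathcal{A}\setminus(\mathcal{A}\cap\mathcal{A}')$). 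Two small points worth making explicit in your write-up: (i) the proposition allows non-unital $\mathcal{A}$ (a $0_K$ summand), and the bimodule property of the orthogonal projection onto $\mathcal{A}$ in the trace inner product does survive in that case, but you should say so since the paper's discussion of $\Phi_{\mathcal{A}}$ is phrased for unital subalgebras; and (ii) the identity $\tr(AX)=\tr(A\,E_{\mathcal{A}}(X))$ uses that $A^*\in\mathcal{A}$ and that the sesquilinear trace inner product is the one defining orthogonality, which is worth a word since you pass between the bilinear pairing $\tr(AS)$ and trace-orthogonality.
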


\begin{proof} We have that $\Phi^C(A)_{ij} = \tr (V_j^*V_i A)$; using the fact that $A = QAQ$ where $Q$ is the largest central projection in $\mathcal{A}$, the above becomes
$$\tr(V_j^*V_i QAQ) = \tr (QV_j^*V_i QA);$$ recalling the previous section, $QV_j^*V_i Q \in \mathcal{A}'$. If $\mathcal{A}$ is unitarily equivalent to $\bigl(\oplus_{k=1}^N I_{m_k}\otimes M_{n_k}\bigr) \oplus 0_K$ then $\mathcal{A}'$ is unitarily equivalent to $\bigl(\oplus_{k=1}^N M_{m_k}\otimes I_{n_k}\bigr)\oplus M_K$, and so for any $A$ we have that $A$ is unitarily equivalent to $\bigl(\oplus_{k=1}^N I_{m_k}\otimes A_k\bigr)\oplus 0_K$ and $QV_j^*V_iQ$ is unitarily equivalent to $\bigl(\oplus_{k=1}^N V^{(k)}_{ji}\otimes I_{n_k}\bigr)\oplus 0_K$; and so
$$\tr (QV_j^*V_i QA) = \tr \biggl(\bigl(\oplus_{k=1}^N V^{(k)}_{ji}\otimes A_k\bigr)\oplus 0_K\biggr) = \sum_{k=1}^N \tr(A_k)\tr(V_{ij}^{(k)}).$$
Hence $$\Phi^C(A) = \sum_{i,j} E_{ji}\biggl(\sum_{k=1}^N \tr (A_k)\tr (V_{ij}^{(k)})\biggr) = \sum_{k=1}^N \tr (A_k)\Lambda_k,$$ where $\Lambda_k = \sum_{i,j} \tr (V_{ij}^{(k)})E_{ji}$.
Clearly, $N=\dim(\mathcal{A}\cap \mathcal{A}')$ and is obviously an upper bound on the dimension of the range of $\Phi^C(\mathcal{A})$.
\end{proof}

\begin{remark}
Considering the general notion of privacy from \cite{cklt}, note that the only information preserved from a private algebra is the information stored in $\mathcal{A}\cap \mathcal{A}'$; as this is an abelian and hence unitarily diagonalizable algebra, all the information can be considered as simply classical probabilities by reading off the diagonal, and so no genuine quantum information survives. By the analysis in Proposition \ref{comppriv}, we see that if $\mathcal{A}$ is correctable for $\Phi$, all of $\mathcal{A}\setminus (\mathcal{A}\cap \mathcal{A}')$ is sent to $0$, and the image of $\mathcal{A}$ under $\Phi^C$ depends only on a compression of $\mathcal{A}$ to $\mathcal{A}\cap \mathcal{A}'$. Hence, by the reasoning in the paper \cite{cklt}, this counts as privatization: only classical information from the diagonal can survive.
\end{remark}

\section{Outlook}

The results of Section~3 give an expanded and deeper understanding of a key piece of quantum complementarity, namely the relationship between correctable and private algebras and codes for channels in quantum information, explicitly in terms of relevant operator structures. The previous section considered these structures and their joint sizes, and established a number of dimension inequalities that further quantify the correction and privacy trade-off for the special case of algebras privatized to quantum states. Further investigation of these relationships and inequalities is warranted. In particular, an expansion of the inequality analysis to general private algebras would be interesting, as would a deeper investigation on the quantification of complementarity via the operator structure identities derived here combined with the theory of quasiorthogonal algebras. Finally, due to practical considerations we have focussed here on the finite-dimensional case, but the notions of correctable and private algebras have been identified for general infinite-dimensional von Neumann algebras, and an exploration of extending the results presented here to that setting could be interesting. We plan to continue these investigations elsewhere.

\vspace{0.1in}

{\noindent}{\it Acknowledgements.} We are grateful to the referee for helpful comments. D.W.K. was partly supported by NSERC and a University Research Chair at Guelph. M.N. was partly supported by a University of Guelph International Student Scholarship, AIMS, and Mitacs. R.P. was partly supported by NSERC. M.R is supported by a postdoctoral fellowship in the Department of Pure Mathematics at the University of Waterloo.

\bibliography{privacy-mult}
\bibliographystyle{amsplain}

\end{document}